\newtheorem{theorem}{Theorem}
\newtheorem{observation}{Observation}
\renewcommand{\tr}{{\rm{tr}}}
\def\bmth{\bm{\theta}}
\def\bra#1{\langle #1|}
\def\ket#1{\left|#1 \right>}
\def\tr{\mbox{Tr}}
\newcommand{\calG}{\mathcal{G}}
\def\bomega{{\boldsymbol{\omega}}}
\def\blambda{{\boldsymbol{\lambda}}}
\def\sign{{\rm Sign}}
\renewcommand{\cal}[1]{\mathcal{#1}}
\renewcommand{\tr}{{\rm{tr}}}
\def\bmth{\bm{\theta}}
\def\bra#1{\langle #1|}
\def\ket#1{\left|#1 \right>}
\def\tr{\mbox{Tr}}
\def\bomega{{\boldsymbol{\omega}}}
\def\blambda{{\boldsymbol{\lambda}}}
\renewcommand{\cal}[1]{\mathcal{#1}}
\newcommand{\id}{\mathbbm{1}}
\renewcommand{\tr}{{\rm{tr}}}
\def\bmth{\bm{\theta}}
\def\bra#1{\langle #1|}
\def\ket#1{\left|#1 \right>}
\def\tr{\mbox{Tr}}
\def\bomega{{\boldsymbol{\omega}}}
\def\blambda{{\boldsymbol{\lambda}}}
\def\sign{{\rm Sign}}
\renewcommand{\cal}[1]{\mathcal{#1}}
\renewcommand{\tr}{{\rm{tr}}}
\def\bmth{\bm{\theta}}
\def\bra#1{\langle #1|}
\def\ket#1{\left|#1 \right>}
\def\tr{\mbox{Tr}}
\def\bomega{{\boldsymbol{\omega}}}
\def\blambda{{\boldsymbol{\lambda}}}
\begin{document}

\title{Faster Born probability estimation via gate merging and frame optimisation}
\author{Nikolaos Koukoulekidis}
	\email{nk2314@imperial.ac.uk}
	\affiliation{Department of Physics, Imperial College London, London SW7 2AZ, UK}
\author{Hyukjoon Kwon}
	\affiliation{Department of Physics, Imperial College London, London SW7 2AZ, UK}
	\affiliation{Korea Institute for Advanced Study, Seoul, 02455, Korea}
\author{Hyejung H. Jee}
	\affiliation{Department of Computing, Imperial College London, London SW7 2AZ, UK}
\author{David Jennings}
	\affiliation{School of Physics and Astronomy, University of Leeds, Leeds, LS2 9JT, UK}
	\affiliation{Department of Physics, Imperial College London, London SW7 2AZ, UK}
\author{M. S. Kim}
	\affiliation{Department of Physics, Imperial College London, London SW7 2AZ, UK}
\date{\today}

\begin{abstract}
Outcome probability estimation via classical methods is an important task for validating quantum computing devices.
Outcome probabilities of any quantum circuit can be estimated using Monte Carlo sampling, where the amount of negativity present in the circuit frame representation quantifies the overhead on the number of samples required to achieve a certain precision. In this paper, we propose two classical sub-routines: circuit gate merging and frame optimisation, which optimise the circuit representation to reduce the sampling overhead.
We show that the runtimes of both sub-routines scale polynomially in circuit size and gate depth.
Our methods are applicable to general circuits, regardless of generating gate sets, qudit dimensions and the chosen frame representations for the circuit components. 
We numerically demonstrate that our methods provide improved scaling in the negativity overhead for all tested cases of random circuits with Clifford+$T$ and Haar-random gates, and that the performance of our methods compares favourably with prior quasi-probability simulators as the number of non-Clifford gates increases.
\end{abstract}
\maketitle

\section{Introduction}

Quantum computers promise to outperform their classical counterparts~\cite{feynman_simulating_1982, preskill2021quantum}.
However, the exact boundary between quantum and classical computational power is far from being fully characterised yet~\cite{Jozsa2014Classical, koh2017, aaronson2017, Hebenstreit2020, Yoganathan2019}.
Several works have demonstrated the difficulty in simulating certain quantum processes classically~\cite{aaronson2010computational, Bremner2011, Morimae2014, Bremner2016, Gao2017, Bermejo2018, fujii2018, bouland2018, boixo_characterizing_2018, Pashayan2020estimation, Yoganathan2019}.
Such results hint towards the ingredients that may be sufficient to achieve quantum advantage.
It is also possible to approach the boundary from the other side, namely by finding efficient methods to classically simulate families of quantum circuits~\cite{Valiant2002,Terhal2002,Bartlett2002,Aaronson2004,Gross2009,Brod2016,haug2022}, thereby providing insights on what ingredients are necessary for quantum advantage.

The question of efficient probability estimation has recently received vast attention due to the ongoing rapid development of quantum devices aiming to supersede classical capabilities (e.g.~\cite{Zhong2020,arute_quantum_2019,Bernien2017,Neill2018}).
Aided by powerful error mitigation techniques~\cite{bonilla_ataides_xzzx_2021,Bravyi_2021,Kandala_2019,Endo_2018,Temme_2017}, noisy intermediate-scale quantum (NISQ)~\cite{preskill_nisq} devices aim to deliver computational advantages, therefore fast and accurate outcome probability estimation is a necessity for quantitative benchmarking of the devices~\cite{harrow_quantum_2017, Temme_2017, Bennick_2017}.
For example, Google's recent experimental realisation of a quantum speed-up~\cite{arute_quantum_2019} relies on classical estimation methods to predict statistical features of the outcome probabilities.

It is expected that exact classical simulation of arbitrary quantum systems is inefficient, as the resource overhead exponentially grows with the size of the system.
Nevertheless, there are restricted classes of quantum circuits for which exact classical simulation is possible~\cite{cit:veitch}. 
The most notable example is given by circuits composed only with stabilizer states and gates in the Clifford group, which can be efficiently simulated classically via the Gottesman-Knill theorem~\cite{Gottesman_thesis}.

Probability estimation methods are varied and aim to explore the efficiency of circuit sampling or simulation beyond the regime of quantum circuits that admit tractable classical representations.
The estimation methods we mention in our work can be classified under one of two leading approaches~\cite{pashayan2021fast, Seddon2021}.
The first involves stabilizer rank-based simulators~\cite{Bravyi_2016,Bravyi2019,Bravyi2016,Qassim2019,huang2019approximate,Kocia2021,kissinger2022classical}, which rely on approximating the circuit components by stabilizer operators.
Every state or operation is assigned an exact or approximate stabilizer rank~\cite{Bravyi2016} indicating the number of stabilizer operators required to perform an exact or approximate decomposition of that component.
If a circuit component is non-classical, its stabilizer rank grows large thus inducing an exponential runtime cost for the estimation of the outcome probability.
Algorithms based on stabilizer decompositions have been very successful in estimating outcome probabilities of circuits dominated by Clifford gates and supplemented by a few types of magic states~\cite{Bravyi_2016,Bravyi2019,pashayan2021fast}.
These Clifford simulators are generalised from pure to noisy settings by the recently proposed density-operator stabilizer-rank simulator~\cite{Seddon2021}.
Furthermore, computing the stabilizer rank of arbitrary gates appears to be an intractable problem in the general case, so recent improvements on computing stabilizer rank bounds for specific non-Clifford states enhance runtimes significantly~\cite{Qassim2021}.

The other family of estimation methods relies on quasi-probabilistic representations of circuit components~\cite{Pashayan2015, Seddon2021, Bennick_2017, Stahlke_2014, Rall_2019, Seddon_2019} and such methods are in principle directly applicable to any quantum circuit without the need for state decompositions, in particular circuits with induced noise.
They are based on the notion of a \emph{frame representation} for the components of the circuit~\cite{ferrie2008frame,Ferrie_2009}.
Specifically, all components are represented by quasi-probability distributions in a certain frame and sampling on these distributions can be performed.
Since any state or gate admits such a representation, quasi-probability simulators naturally apply to arbitrary circuits with noise.
Many such frames have been studied~\cite{ferrie2008frame, Ferrie_2009, cit:gross, Ruzzi_2005, Marchiolli_2005, Rall_2019, Franca_2021}, and the runtime depends on the total negativity that is present in the circuit representation~\cite{Pashayan2015}.
A notable frame simulator is the dyadic frame simulator~\cite{Seddon2021} which relies on operator decompositions into stabilizer dyads $\ket{L}\bra{R}$, where $\ket{L}$ and $\ket{R}$ are pure stabilizer states. 
This method assigns dyadic negativity to non-classical elements, which quantifies the extent to which the operator's optimal linear decomposition into stabilizer dyads departs from a convex combination.
The dyadic simulator is a state-of-the-art quasi-probability frame simulator for qubits, as demonstrated by its low runtime scaling $O(4^{0.228t})$ with $t$ non-Clifford gates~\cite{Seddon2021}.
However, optimising the decomposition of an operator in dyads is computationally challenging.

Stabilizer rank simulators generally offer two advantages over estimation methods based on frame representations.
Firstly, they can be used for sampling the circuit output probabilities, which can be viewed as a stronger notion of simulation than probability estimation. 
Frame representation methods produce probability estimates with additive precision, which does not suffice for sampling~\cite{Bravyi2019,Pashayan2020estimation}.
Secondly, the stabilizer-rank algorithms developed in~\cite{Bravyi2019,Seddon2021,Bravyi_2016} are quadratically faster as they achieve a scaling of $O(2^{0.228t})$ in the asymptotic limit.
However, specialised simulators (e.g.~\cite{Bravyi2019,Bravyi_2016}) suffer from additional polynomial runtime factors, which tend to be more significant compared to the exponential runtime for the experimentally relevant case of big circuits with a low number of non-Clifford elements. 
Recently, an algorithm of additive precision~\cite{pashayan2021fast} has also been shown to asymptotically outperform the methods of~\cite{Bravyi2019,Bravyi_2016}, at least in certain parameter regimes.

In this paper, we focus on quasi-probability estimation methods based on frame representations and look for a way to improve the performance of outcome probability estimation. 
Recently, there has been a proposal of a Monte Carlo sampling algorithm which allows for quasi-probability estimation of circuits that contain a bounded amount of negativity in their representation~\cite{Pashayan2015}.
For classes of circuits in which negativity grows only polynomially in the number of input states, this estimation algorithm is efficient.
The negativity of the circuit therefore indicates the hardness of the probability sampling problem.
Although the negativity scales exponentially with the number of non-Clifford gates, the scaling factors hugely depend on the frame choice. 
Until now, however, the same fixed representation has been applied on every circuit component and the flexibility on reducing negativity has been limited.

Our aim is to explore the extent to which varying the frame representations of the components in a given circuit can lead to a reduction in the total circuit negativity.
To this end, we propose a pre-processing routine for any general quantum circuit, which aims at reducing the negativity overhead required for probability estimation.
Our proposed routine consists of two distinct sub-routines:
\begin{enumerate}
    \item \textbf{Circuit gate merging:} We introduce the idea of merging gates together into new $n$-qudit gates for fixed $n$ in the context of reducing sampling overhead.
    This sub-routine reduces the negativity of the entire circuit and is independent of the estimation method used.
    
    We demonstrate numerically that the average negativity reduction over a random ensemble of circuits is greater as the number of non-Clifford elements, e.g., $T$ gates, increases and is comparable to recent asymptotic negativity bounds~\cite{Seddon2021, howard_2017, Heinrich2019}. 
    Our routine does not depend on the specifics of the circuit gate set and can therefore be used in cases of gates which are hard to decompose, e.g., Haar-random gates.
    \item \textbf{Frame optimisation:} We introduce the idea of using different frames to represent the input and output phase spaces of the gates in the circuit.
    This is inspired by work in continuous variables~\cite{Keshari2016}, but our approach is novel in the context of discrete quasi-probability sampling methods.
    
    We argue that this sub-routine compliments gate merging as an additional source of negativity reduction when merging is no longer efficient. 
    We then demonstrate numerically that instances of Clifford+$T$ circuits and circuits with Haar-random gates admit significant negativity reductions by introducing additional frames in the circuit representation.
\end{enumerate}
We note that a polynomial runtime for these classical sub-routines with respect to the circuit size should be guaranteed to effectively reduce the overall runtime of the sampling method. 
As proof of principle, we provide explicit algorithms in the main text that ensure this condition for each sub-routine.

This paper is organised as follows.
In Section~\ref{sec:outline}, we review the frame representation and the estimation algorithm using quasi-probability representations of a given quantum circuit. 
In Section~\ref{sec:overview}, we outline our results within the context of the current state of quasi-probability simulator research. 
In Section~\ref{sec:merge} and~\ref{sec:opt} we describe the two sub-routines in more detail, before providing a summary in Section~\ref{sec:summary}.

\section{Preliminaries}
\label{sec:outline}

\subsection{Frame representation of quantum circuits}
We first give a brief overview on classical circuit sampling based on the method of frame representation. 
Suppose that an $N$-qudit quantum circuit $C$ is composed of the initial state preparation $\rho$, sequential quantum gates $U_1, U_2, \dots ,U_L$ and the measurement effect $E$.
The outcome probability of the quantum circuit $p_C = {\rm Tr} [ U_L \dots U_2 U_1 \rho U_1^\dagger U_2^\dagger \dots U_L^\dagger E]$ can be estimated by describing the quantum state $\rho$ as quasi-probability distributions over phase space points $\blambda \in \mathbb{Z}^{2N}_d$ and the quantum operations $U_i$ as the transition matrices of the distributions. 
More specifically, a phase space can be constructed from a frame defined as a set of operators ${\cal F} \coloneqq \{ F(\blambda)\}$ and its dual ${ \cal G} \coloneqq \{ G(\blambda) \}$~\cite{ferrie2008frame,Ferrie_2009}, such that any operator $O$ is expressed as
\begin{equation}
    O = \sum_\blambda {\rm Tr}[F(\blambda) O ] G(\blambda).
\end{equation}
For a given frame, the outcome probability can be expressed in terms of the representation as
\begin{equation}
    p_C = \sum_{\blambda_0, \dots, \blambda_L} W_E(\blambda_L) \left[\prod_{l=1}^L  W_{U_l}(\blambda_l|\blambda_{l-1})\right]  W_\rho(\blambda_0),
\end{equation}
where we define
\begin{align}
    W_\rho(\blambda) &= {\rm Tr}[F(\blambda) \rho],\\
    W_U(\blambda'|\blambda) &={\rm Tr}[ F(\blambda')U G(\blambda)U^\dagger], \text{ and }\\
    W_E(\blambda) &={\rm Tr}[E G(\blambda)].
\end{align}
In the case where 1) $\rho$ and $E$ are products of local initial states and measurement effects, 2) $W_\rho(\blambda_0)$ and $W_E(\blambda_L)$ are classical probability distributions, and 3) $W_{U_l}(\blambda_l|\blambda_{l-1}),$ for $\ell=1,\dots,L$, are classical conditional probability distributions for all $l$, efficient classical simulation is possible, where the sampling runtime scales polynomially with $N$ and $L$~\cite{Mari2012}.
The simulation is performed by sampling the trajectories of $(\blambda_0, \dots, \blambda_L)$ from the initial distribution $P(\blambda_0) = W_\rho(\blambda_0)$ and the transition matrix at each step $P_l(\blambda_l|\blambda_{l-1}) = W_{U_l}(\blambda_l|\blambda_{l-1})$, which leads to the probability estimate, $\hat{p}_C = W_E\left(\blambda_L\right)$.
Taking an average over $M$ probability estimates converges to the Born probability as $M$ increases.

\subsection{Overhead of classical simulation}
Non-classicality in the quantum process is represented by negativities in $W_\rho(\blambda)$ or $W_U(\blambda'|\blambda)$, which gives rise to quasi-probabilities. 
In general, $W_\rho(\blambda)$ and $W_U(\blambda'|\blambda)$ consist of real components that can attain negative values, while satisfying the normalisation conditions,
\begin{align}
    \sum_{\blambda \in \mathbb{Z}^{2N}_d} W_\rho(\blambda) &= 1 \text{ and} \\
    \sum_{\blambda' \in \mathbb{Z}^{2N}_d} W_U(\blambda'|\blambda) &= 1 \text{ for all } \blambda \in \mathbb{Z}^{2N}_d.
\end{align}
Despite the presence of negativities in the distributions and update matrices, Monte Carlo methods can still be used with adjustments as introduced by Pashayan~\emph{et al.}~\cite{Pashayan2015} in order to perform probability sampling.
This can be done by sampling over $P(\blambda_0) = |W_\rho(\blambda_0)|/\sum_{\blambda_0} |W_\rho(\blambda_0)|$ for the initial state preparation and taking the transition matrix of $P_l(\blambda_l|\blambda_{l-1}) = |W_{U_l}(\blambda_l|\blambda_{l-1})|/\sum_{\blambda_l} |W_{U_l}(\blambda_l|\blambda_{l-1})|$ for the quantum gate, while keep track of the signs.
In this case, the probability estimate is modified to
\begin{equation}
\begin{split}
\hat{p}_C 
=\ &\sign \left( W_\rho(\blambda_0) \prod_{l=1}^L W_{U_l}(\blambda_l|\blambda_{l-1})\right) \times \\
&N_\rho \left(\prod_{l=1}^L N_{U_l}(\blambda_{l-1}) \right) W_E(\blambda_L),    
\end{split}
\end{equation}
where we have defined
\begin{align}
    N_\rho &\coloneqq \sum_{\blambda_0} \left| W_\rho(\blambda_0) \right |\\
    N_{U_l}(\blambda_{l-1}) &\coloneqq \sum_{\blambda_l} \left| W_{U_l}(\blambda_l|\blambda_{l-1}) \right|.
\end{align}
In order to converge to the Born probability, one can similarly take the average of increasingly many probability estimates sampled over trajectories $(\blambda_0, \dots, \blambda_L)$ using distributions $P(\blambda_0)$ and $P_l(\blambda_l|\blambda_{l-1})$.

This directly relates the total amount of circuit negativity to the computational overhead: the larger the negativity in the circuit, the more samples required for an accurate estimation.

\begin{observation}[Pashayan \textit{et al.}~\cite{Pashayan2015}]
\label{obs:Pashayan}
The outcome probability $p_C$ of the quantum circuit $C$ can be estimated by $\hat{p}_C$ from the number of samples
\begin{align}\label{eq:n_samples_in_negativity}
    M \geq M(\epsilon, \delta) = \frac{2}{\epsilon^2} N_C^2 \ln (2/\delta),
\end{align}
with at least probability $1 -\delta $ of having error less than $\epsilon$. Here, 
\begin{equation}\label{eq:circuit_negativity}
    N_C = N_\rho \times \left[ \prod_{l=1}^L \max_{\blambda_0,\dots,\blambda_{L-1}} N_{U_l}(\blambda_{l-1}) \right] \times \max_{\blambda_L} \left | W_E(\blambda_L) \right|,
\end{equation}
is the (maximum) circuit negativity.
\end{observation}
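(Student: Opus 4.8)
The plan is to recognise Observation~\ref{obs:Pashayan} as a standard concentration-of-measure statement for a bounded, unbiased Monte Carlo estimator, and then invoke Hoeffding's inequality. First I would isolate the single-sample random variable: for one trajectory $(\blambda_0,\cdots,\blambda_L)$ drawn from the sampling distribution $P(\blambda_0)\prod_{l=1}^L P_l(\blambda_l|\blambda_{l-1})$, define
\begin{equation}
X = \Sigma(\blambda_0,\cdots,\blambda_L)\, N_\rho \left(\prod_{l=1}^L N_{U_l}(\blambda_{l-1})\right) W_E(\blambda_L),
\end{equation}
so that $\hat{p}_C$ is the empirical mean of $M$ independent copies $X_1,\dots,X_M$ of $X$.

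The second step is to verify unbiasedness, $\mathbb{E}[X]=p_C$. Writing the sign of a product as the product of signs, $\Sigma = \sign[W_\rho(\blambda_0)]\prod_l \sign[W_{U_l}(\blambda_l|\blambda_{l-1})]$, each normalisation factor $N_\rho$ and $N_{U_l}(\blambda_{l-1})$ cancels against the corresponding denominator in $P(\blambda_0)$ and $P_l(\blambda_l|\blambda_{l-1})$, while the identity $|W|\cdot\sign[W]=W$ restores the signed quasi-probabilities. Summing over all trajectories then reproduces exactly the frame expression for $p_C$, confirming that the estimator is unbiased.

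The third and most important step is the uniform bound $|X|\le N_C$. Because $\Sigma\in\{-1,+1\}$, the sign contributes a factor of unit modulus and drops out, leaving $|X| = N_\rho \prod_l N_{U_l}(\blambda_{l-1})\,|W_E(\blambda_L)|$, a product of nonnegative quantities. Bounding each trajectory-dependent factor by its maximum over phase-space points, namely $N_{U_l}(\blambda_{l-1})\le \max_{\blambda_0,\cdots,\blambda_{L-1}} N_{U_l}(\blambda_{l-1})$ and $|W_E(\blambda_L)|\le \max_{\blambda_L}|W_E(\blambda_L)|$, gives $|X|\le N_C$ for every realisation, so each $X_i$ lies in the interval $[-N_C, N_C]$ of width $2N_C$.

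Finally, I would apply Hoeffding's inequality to this bounded i.i.d. sum, yielding $\Pr(|\hat{p}_C - p_C|\ge \epsilon)\le 2\exp(-M\epsilon^2/(2N_C^2))$. Demanding that the right-hand side be at most $\delta$ and solving for $M$ produces $M\ge (2/\epsilon^2)N_C^2\ln(2/\delta)$, which is precisely the claimed sample complexity. The only genuine subtlety is the boundedness step: it is essential that the sign be tracked separately, so that $|X|$ depends only on the positive normalisation weights, and that these weights be bounded uniformly over all trajectories by the maxima appearing in the definition of $N_C$; everything else is routine manipulation of the sampling distribution and a direct application of a standard tail bound.
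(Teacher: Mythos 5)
Your proof is correct and is essentially the standard argument behind the cited result: the paper states Observation~\ref{obs:Pashayan} without proof, importing it from Ref.~\cite{Pashayan2015}, whose derivation is exactly your three steps --- unbiasedness of the signed estimator under the rescaled sampling distributions, the uniform bound $|X|\le N_C$ obtained by replacing the trajectory-dependent factors $N_{U_l}(\blambda_{l-1})$ and $|W_E(\blambda_L)|$ by their maxima, and Hoeffding's inequality for i.i.d.\ variables in $[-N_C, N_C]$. Nothing further is needed.
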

As is clear by Eq.~(\ref{eq:n_samples_in_negativity}), the negativity of the circuit acts as an overhead for the convergence time of the sampling algorithm, therefore it is desirable to reduce it before executing the sampling by considering different frame choices.

\section{Main results}
\label{sec:overview}

In this work, we develop a pre-processing routine to reduce the negativity of the circuit, which in turn reduces the number of samples required to estimate the outcome probability of the circuit. Our routine is applicable to any general circuit consisting of a product input state and product measurement, but independently of the input state dimension, the gate set (e.g. Clifford unitaries or Haar-random gates) and adaptive operations based on intermediate measurement outcomes.

\begin{figure*}
    \includegraphics[width=1.\linewidth]{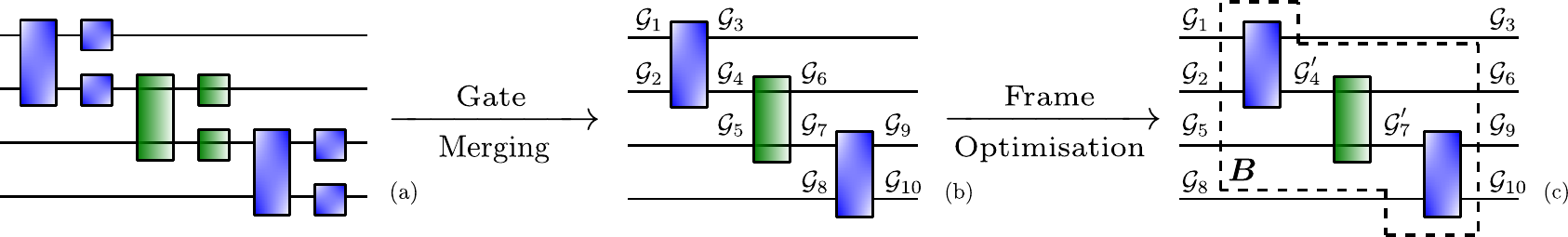}
    \centering
    \caption{Sketch of routine on a toy circuit. 
    The first step (a) $\rightarrow$ (b) is gate merging, here implemented with $n = 2$. 
    Gates that share input and output wires merge in the schematic way depicted in the figure.
    The second step (b) $\rightarrow$ (c) is frame optimisation, here implemented with $\ell = 2$ in a block $\mathbf{B}$ comprising the three merged gates.
    The optimisation results into updated frames ${\cal G}_4', {\cal G}_7'$, while the remaining frames that connect the block $\mathbf{B}$ to the rest of the circuit components are left unchanged at this optimisation cycle.
    }
    \label{fig:sketch_routine}
\end{figure*}

\subsection{Frame parametrisation}\label{subsec:frame_parametrisation}

The central focus of our work is to consider frame parametrisations that are allowed to vary across the circuit. It is clear from the definitions that the circuit negativity of a given circuit in Eq.~\eqref{eq:circuit_negativity} depends on the choice of the frame $\mathcal{F}$ and its dual $\mathcal{G}$.
Note that $\mathcal{F}$ and $\mathcal{G}$ are uniquely defined by each other for phase space dimension equal to $d^2$, where $d$ is the qudit dimension.
We therefore make the dependence clear by labelling the representation functions by  $\mathcal{G}$:
\begin{align}
    W^{\mathcal{G}}_\rho(\blambda) &= {\rm Tr}[F(\blambda)\rho], \label{eq:QPD_with_frame_para_state}\\
    W^{\mathcal{G}'|\mathcal{G}}_U(\blambda'|\blambda) &={\rm Tr}[ F'(\blambda') U G(\blambda)U^\dagger], \text{ and } \label{eq:QPD_with_frame_para_gate}\\
    W^{\mathcal{G}}_E(\blambda) &={\rm Tr}[E G(\blambda)], \label{eq:QPD_with_frame_para_meas}
\end{align}
where we used different frames, $\mathcal{G}$ and $\mathcal{G}'$, for the input and output wires respectively in the definition of $W_{U}^{\mathcal{G}'|\mathcal{G}}$.

In order to ensure that the number of frames does not grow exponentially with the number of qudits $N$, we restrict to \emph{product} frames that are constructed as tensor products of single qudit frames. 
This allows us to parametrise each single qudit phase space separately, rather than the entire $N$-qudit phase space.
Therefore, we reserve the label ${\cal G}$ for denoting single qudit frames and the boldface symbol ${\boldsymbol{\cal G}}$ for denoting a set of single qudit frames.
The negativity of each circuit component can now be expressed as
\begin{align}
    N^{{\boldsymbol{\cal G}}}_\rho &= \sum_\blambda \left| W^{{\boldsymbol{\cal G}}}_\rho(\blambda) \right |,\\
    N^{{\boldsymbol{\cal G}}'|{\boldsymbol{\cal G}}} &= \max_\blambda \left[ \sum_{\blambda'} \left| W^{{\boldsymbol{\cal G}}'|{\boldsymbol{\cal G}}}_U(\blambda'|\blambda) \right| \right], \text{ and } \\
    N^{{\boldsymbol{\cal G}}}_E &= \max_{\blambda} \left| W^{{\boldsymbol{\cal G}}}_E(\blambda) \right |,
\end{align}
where ${\boldsymbol{\cal G}}, {\boldsymbol{\cal G}}'$ contain elements from the complete set of frames required to represent the circuit.
In practice, each circuit component is parametrised only via the frames that correspond to its input and output wires.
For example, in Fig.\ref{fig:sketch_routine}(b), when parametrising the first gate in the sequence, we can simply consider ${\boldsymbol{\cal G}}$ as the set $\{ {\cal G}_1, {\cal G}_{2} \}$ and ${\boldsymbol{\cal G}}'$ as the set $\{ {\cal G}_3, {\cal G}_4 \}$.
If only a unique frame representation ${\cal G}$ is used for all circuit components, then ${\boldsymbol{\cal G}} = {\boldsymbol{\cal G}}' = \{{\cal G}\}$ in the expressions above and the label can be dropped, simplifying to the notation of the previous section.
The total negativity of the parametrised circuit can now be expressed as a function of the circuit frame set $\boldsymbol{ \cal G}$:
\begin{equation}\label{eq:circuit_param_negativity}
    N_C({\boldsymbol{\cal G}}) = N_\rho^{{\boldsymbol{\cal G}}} \times \left[ \prod_{l=1}^L N_{U_l}^{{\boldsymbol{\cal G}}'|{\boldsymbol{\cal G}}} \right] \times N_E^{{\boldsymbol{\cal G}}}.
\end{equation}
We note that Observation~\ref{obs:Pashayan} still holds by replacing $N_C$ with the more general form $N_C(\boldsymbol{\mathcal{G}})$. Our main objective is to study the reduction of this circuit negativity by tuning $\boldsymbol{ \cal G}$.

\subsection{Examples of frame parametrisations}
\label{subsec:examples_frame_para}

While our results are general and applicable to any family of parametrised frames, in this work we provide two examples of explicit, product frame parametrisations:
(i) parametrised Wigner frames and (ii) rotated Pauli frames. 

\emph{Parametrised Wigner frames} employ the conventional phase space of the discrete Wigner function~\cite{Ferrie_2009, cit:gross}. Let us define the discrete displacement operator for a $d$-dimensional system as
\begin{equation}\label{eq:displacement_operators}
    D(p,q)= \chi(-2^{-1} pq) Z^p X^q,    
\end{equation}
where $\chi(q) = e^{i(2\pi/d)q}$.
For a qubit system ($d=2$), this takes the form $D(p,q)=i^{pq}Z^p X^q$. 
It can be generalised to an $N$-qudit system as
\begin{equation}
    D(\blambda) = \bigotimes_{i=1}^N D(p_i, q_i)\,,
\end{equation}
where $\blambda \coloneqq (p_1, q_1, p_2, q_2, \dots, p_N, q_N)^T \in \mathbb{Z}^{2N}_{d}$ denotes a phase space point of the whole system. We then define the frame ${\cal F} = \{ F(\blambda) \} \coloneqq \{ D(\blambda) F_0 D^\dagger (\blambda) \}$ and its dual frame ${\cal G} = \{ G(\blambda) \} \coloneqq \{ D(\blambda) G_0 D^\dagger (\blambda)\}$ using the following reference operators:
\begin{align}\label{eq:para_Wigner_frame_referece_operators}
    F_0 &= \frac{1}{d} \sum_\blambda \left[ \frac{1}{g(\blambda)} \right] D(\blambda)\\
    G_0 &= \frac{1}{d} \sum_\blambda  g(\blambda) D(-\blambda)\,,
\end{align}
where we introduced the parametrisation function $g(\blambda)$. Note that the following relation holds:
\begin{align}
    g(\blambda) = \tr\left[G_0 D(\blambda)\right]\,,
\end{align}
so the parametrisation function $g(\blambda): \mathbb{Z}^{2N}_d \mapsto \mathbb{C}\setminus\{0\}$ can be fully characterised by the reference operator $G_0$.
In order to impose that $W_\rho^{\cal G}(\blambda)$ is real-valued and that $\sum_\blambda W_\rho^{\cal G}(\blambda) = 1$, we need the additional conditions $g^*(\bomega) = g(-\bomega)$ and $g(\boldsymbol{0}) = 1$, which are equivalent to $G_0^\dagger = G_0$ and ${\rm Tr} [G_0] = 1$ respectively.
By taking $g(\blambda)=1$ for all $\blambda$, the conventional discrete Wigner function~\cite{cit:gross} is recovered. One can calculate the quasi-probability distributions of circuit elements via Eq.~\eqref{eq:QPD_with_frame_para_state}-\eqref{eq:QPD_with_frame_para_meas} using the defined frame and dual frame. In odd dimensions, the parametrised Wigner frame is a good choice for Clifford dominated circuits as Clifford gates do not possess any negativity in the conventional Wigner distribution. Therefore, $g(\blambda)=1$ is already optimal for most circuit elements when considered in isolation and constitutes an obvious starting point for frame optimisation.

In the qubit case, it is known that the Hadamard and CNOT gates have non-zero negativity even in the conventional Wigner distribution \cite{raussendorf2017contextuality}, which motivates us to introduce the next frame parametrisation, valid only for qubits: the rotated Pauli frames.

\emph{Rotated Pauli frames} are based on the Bloch decomposition of a quantum operator. Consider the set of displacement operators for a single qubit $\{D(\blambda)\}$ as defined in Eq.~\eqref{eq:displacement_operators} for $\blambda \in \mathbb{Z}_2^2=\{(0,0),(0,1),(1,0),(1,1)\}$. The usual Bloch vector for a single-qubit state $\rho$ can be written as
\begin{equation}\label{eq:bloch_vector}
    W_{\rho}(\blambda) = \frac{1}{2}\tr\left[ \rho D(\blambda) \right],
\end{equation}
and this defines a valid quasi-probability distribution with frame $\{\frac{1}{2}D(\blambda)\}$. We can define a new frame by applying a rotation to the space of the Bloch vector. Let us consider a rotational angle vector $\bmth := (\theta_X, \theta_Y, \theta_Z)$ and a corresponding rotation operator $R(\bmth) := R(\theta_Z)R(\theta_Y)R(\theta_X)$, where $R(\theta_X) \coloneqq e^{-i\theta X / 2}$ and similarly for $Y,Z$. Applying this to the Bloch vector in Eq.~\eqref{eq:bloch_vector} results in a set of rotated displacement operators, parametrised by $\bmth$:
\begin{widetext}
\begin{align}
    D^{\bmth}(0,0) &\coloneqq \frac{1}{2} \id \\
    D^{\bmth}(0,1) &\coloneqq \begin{pmatrix} -\sin{\theta_Y} & e^{-i \theta_X} \cos{\theta_Y} \\ e^{+i \theta_X} \cos{\theta_Y} & \sin{\theta_Y} \end{pmatrix} \\
    D^{\bmth}(1,0) &\coloneqq \begin{pmatrix} \cos{\theta_Y}\cos{\theta_Z} & e^{-i \theta_X} (\sin{\theta_Y}\cos{\theta_Z} + i\sin{\theta_Z}) \\ e^{+i \theta_X} (\sin{\theta_Y}\cos{\theta_Z} - i\sin{\theta_Z}) & -\cos{\theta_Y}\cos{\theta_Z} \end{pmatrix} \\
    D^{\bmth}(1,1) &\coloneqq \begin{pmatrix} \cos{\theta_Y}\sin{\theta_Z} & e^{-i \theta_X} (\sin{\theta_Y}\sin{\theta_Z} + i\cos{\theta_Z}) \\ e^{+i \theta_X} (\sin{\theta_Y}\sin{\theta_Z} - i\cos{\theta_Z}) & -\cos{\theta_Y}\sin{\theta_Z} \end{pmatrix}\,.
\end{align}
\end{widetext}
Then, we define the frame ${\cal F} = \{ F(\blambda) \}$ and its dual frame ${\cal G} = \{ G(\blambda) \}$ as
\begin{align}
    F(\blambda) &\coloneqq \frac{1}{2^N} D^{\bmth}(\blambda), \\
    G(\blambda) &\coloneqq D^{\bmth}(\blambda)\,,
\end{align}
which provide a parametrised frame representation for a qubit.
This can be generalised to an $N$-qubit system via
\begin{equation}
    D^{\bmth}(\blambda) = \bigotimes_{i=1}^N D^{\bmth_i}(\blambda_i)\,
\end{equation}
with $\bmth = (\bmth_1, \bmth_2, \dots, \bmth_N$), where $\bmth_i$ is the rotational angle vector for the $i$-th qubit. 
The rotated Pauli frames possess the desired property that all stabilizer states and Clifford gates have zero negativity in the conventional Bloch frame representation with $\bmth=(0,0,0)$. Thus, when a given qubit circuit is dominated by Clifford gates, it can be advantageous to employ the rotated Pauli frame.

\subsection{Pre-processing routine for negativity reduction}

The central idea of our pre-processing routine for negativity reduction can now be expressed by the following lower bounds on gate negativity.

\begin{theorem}\label{thm:MainStatement}
For two consecutive gates $U$ and $V$, 
the following bounds on negativity hold:
\begin{equation}\label{eq:MainStatement}
    N^{\boldsymbol{\cal G}|\boldsymbol{\cal G}}_V N^{\boldsymbol{\cal G}|\boldsymbol{\cal G}}_U \;\geq\; \min_{\boldsymbol{\cal G}'}\; N^{\boldsymbol{\cal G}|\boldsymbol{\cal G}'}_V N^{\boldsymbol{\cal G}'|\boldsymbol{\cal G}}_U \;\geq\; N^{\boldsymbol{\cal G}|\boldsymbol{\cal G}}_{VU},
\end{equation}
where $\boldsymbol{\cal G}$ and $\boldsymbol{\cal G}'$ are frame sets that represent gates $U,V$ and $UV$.
\end{theorem}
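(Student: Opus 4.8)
The plan is to reduce the statement to two elementary ingredients: a concatenation identity that writes the transition matrix of the composite gate $VU$ as an ordinary matrix product through an arbitrary intermediate frame, and the sub-multiplicativity of the induced $\ell_1$ norm that the gate negativity represents. The leftmost inequality will then be a one-line consequence of the definition of the minimum.

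First I would establish the concatenation identity. Fix an intermediate product frame set $\boldsymbol{\cal G}'$ on the wires between $U$ and $V$, with frame and dual-frame operators $F'(\blambda')$, $G'(\blambda')$. The completeness relation gives, for any operator $O$ on the intermediate system, $O = \sum_{\blambda'} \tr[F'(\blambda')\, O]\, G'(\blambda')$. Applying this resolution to $O = U G(\blambda) U^\dagger$ and substituting into $W_{VU}^{\boldsymbol{\cal G}|\boldsymbol{\cal G}}(\blambda''|\blambda) = \tr[F(\blambda'')\, VU\, G(\blambda)\, U^\dagger V^\dagger]$, I would obtain
\begin{equation}
    W_{VU}^{\boldsymbol{\cal G}|\boldsymbol{\cal G}}(\blambda''|\blambda) = \sum_{\blambda'} W_V^{\boldsymbol{\cal G}|\boldsymbol{\cal G}'}(\blambda''|\blambda')\, W_U^{\boldsymbol{\cal G}'|\boldsymbol{\cal G}}(\blambda'|\blambda),
\end{equation}
valid for \emph{any} choice of $\boldsymbol{\cal G}'$: the composite transition matrix is the matrix product of the two factor transition matrices evaluated in the common intermediate frame.

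Next I would note that $N_U^{\boldsymbol{\cal G}'|\boldsymbol{\cal G}} = \max_\blambda \sum_{\blambda'} |W_U^{\boldsymbol{\cal G}'|\boldsymbol{\cal G}}(\blambda'|\blambda)|$ is exactly the induced $\ell_1 \to \ell_1$ operator norm (maximum absolute column sum) of the transition matrix. For the rightmost inequality I apply the triangle inequality to the identity above, swap the order of summation, and factorise:
\begin{align}
    \sum_{\blambda''} \bigl|W_{VU}^{\boldsymbol{\cal G}|\boldsymbol{\cal G}}(\blambda''|\blambda)\bigr|
    &\leq \sum_{\blambda'} \bigl|W_U^{\boldsymbol{\cal G}'|\boldsymbol{\cal G}}(\blambda'|\blambda)\bigr| \sum_{\blambda''} \bigl|W_V^{\boldsymbol{\cal G}|\boldsymbol{\cal G}'}(\blambda''|\blambda')\bigr| \nonumber \\
    &\leq N_V^{\boldsymbol{\cal G}|\boldsymbol{\cal G}'} \sum_{\blambda'} \bigl|W_U^{\boldsymbol{\cal G}'|\boldsymbol{\cal G}}(\blambda'|\blambda)\bigr|.
\end{align}
Taking the maximum over the input point $\blambda$ gives $N_{VU}^{\boldsymbol{\cal G}|\boldsymbol{\cal G}} \leq N_V^{\boldsymbol{\cal G}|\boldsymbol{\cal G}'} N_U^{\boldsymbol{\cal G}'|\boldsymbol{\cal G}}$ for every $\boldsymbol{\cal G}'$; minimising the right-hand side over $\boldsymbol{\cal G}'$ delivers the rightmost inequality. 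The leftmost inequality is then immediate, since $\boldsymbol{\cal G}' = \boldsymbol{\cal G}$ is one admissible choice in the minimisation, so the minimum cannot exceed $N_V^{\boldsymbol{\cal G}|\boldsymbol{\cal G}} N_U^{\boldsymbol{\cal G}|\boldsymbol{\cal G}}$.

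The only genuinely delicate point is the concatenation identity: it requires that $\boldsymbol{\cal G}'$ be a bona fide product frame whose dual resolves the identity on the operator space of the intermediate wires, so that inserting $\sum_{\blambda'} \tr[F'(\blambda')\,\cdot\,]\, G'(\blambda')$ reproduces $U G(\blambda) U^\dagger$ exactly. Once this resolution is in place, the rest is the triangle inequality and sub-multiplicativity of the induced $\ell_1$ norm, so I expect the bulk of the effort to be careful bookkeeping of the input, intermediate, and output phase-space labels rather than any substantive estimate.
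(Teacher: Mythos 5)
Your proposal is correct and follows essentially the same route as the paper: the paper proves the rightmost inequality via its Observation~2, whose proof is exactly your concatenation identity $W_{VU}(\blambda_3|\blambda_1)=\sum_{\blambda_2}W_V(\blambda_3|\blambda_2)W_U(\blambda_2|\blambda_1)$ (the dual-frame resolution of the identity on the intermediate wires, with the frame-set label $\boldsymbol{\cal G}$ implicitly carrying the intermediate frames) followed by the triangle inequality and factorisation, i.e.\ sub-multiplicativity of the induced $\ell_1$ norm, and it obtains the leftmost inequality just as you do, by noting that $\boldsymbol{\cal G}'=\boldsymbol{\cal G}$ is one admissible choice in the minimisation. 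Your explicit tracking of the intermediate frame $\boldsymbol{\cal G}'$ before minimising is merely a more careful bookkeeping of what the paper's Observation~2 states for ``any frame set assigned to the gate sequence,'' not a different argument.
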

\begin{proof}
    The first inequality holds since $\boldsymbol{\cal G}$ is one specific choice of the optimisation variable set $\boldsymbol{\cal G}'$. The second inequality is due to Observation~\ref{obs:merge_gates} in the next section.
\end{proof}
\noindent Theorem~\ref{thm:MainStatement} motivates us to introduce two sub-routines applicable to any quasi-probability estimation algorithm with runtime cost determined by the circuit negativity.

The second inequality in Theorem~\ref{thm:MainStatement} suggests that merging two gates into one is generally advantageous in minimising the total negativity. This leads to the first pre-processing sub-routine, \textit{gate merging}. The inequality is independent of the specific frame parametrisation and can be directly extended to an arbitrary number of gates. The trade-off is that the merged gate may be of a larger size. For example, if $U$ and $V$ are 2-qudit gates sharing one wire between them, gate $VU$ will be a 3-qudit gate. The dimension of the merged gate increases exponentially as the number of qudits involved becomes larger, hence one should truncate the maximum number of qudits acted on by the merged gates, which we define as the spatial parameter $n$.

The first inequality in Theorem~\ref{thm:MainStatement} states that, unless the frames between two gates in sequence are already optimal, we can always reduce the total negativity of the two gates by optimising the frames they share.
This leads to the second sub-routine, \textit{frame optimisation}.
The optimisation can be directly generalised to a circuit block $\mathbf{B}$ containing a sequence of $\ell$ frames $\boldsymbol{\calG}$ by simultaneously optimising all the frames in the block, $\min_{\boldsymbol{\calG}} N_{\mathbf{B}}(\boldsymbol{\calG})$.
The temporal parameter $\ell$ is the number of frames to be optimised in one optimisation cycle.
The optimisation takes place iteratively in the sense that every optimisation cycle optimises the frames within a block, taking as an initial state the optimised frames obtained from the previous cycle.
This ensures that negativity cannot increase above its initial value, no matter how many optimisation cycles occur.

Given fixed values for the truncation parameters $n,\ell$, we show in the following two sections that the total runtime $\tau$ of our routine is polynomial in the number of circuit components,
\begin{equation}
    \tau = O(N, L^2).
\end{equation}
In general, larger $n$ or $\ell$ give larger negativity reduction at the cost of additional classical computation.

We note that gate merging yields lower negativity than any frame optimisation between the gates. 
However, fixing $n < N$ prevents us from merging gates indefinitely, so frame optimisation can then be used for further negativity reduction.

\begin{algorithm}[H]
\caption{Outcome Probability Estimation with Merging and Optimisation}
\label{alg:complete_algorithm}
\begin{algorithmic}[1]
\Statex \textbf{Input:} An $N$-qudit quantum circuit $C$ with a product input state $\rho=\rho_1\otimes\dots\otimes\rho_N$, the list of gates ${\cal U} = \{U_1,...,U_L\}$, and the product measurement operator $E=E_1\otimes\dots\otimes E_N$; the spatial parameter $n$; the temporal parameter $\ell$; the desired accuracy $\epsilon$.
\State Run gate merging (Sub-routine~\ref{alg:merge_algorithm}) with the input gate sequence and $n$ and return the merged gate sequence $\{V_1,...,V_{L'}\}$ with $L'\leq L$ consisting of gates acting on at most $n$ qudits.
\State Run frame optimisation (Sub-routine~\ref{alg:frame_opt}) with the merged circuit and $\ell$ and return the optimised frame sequence $\boldsymbol{\mathcal{G}}_{\rm opt}$.
\State Run a sampling algorithm to achieve the input accuracy $\epsilon$ according to Eq.~\eqref{eq:n_samples_in_negativity} using the quasi-probability representations of the merged circuit obtained with the optimised frame sequence $\boldsymbol{\mathcal{G}}_{\rm opt}$.
\Statex \textbf{Output:} $p_{\rm est}$, the estimated outcome probability.
\end{algorithmic}
\end{algorithm}

We present an algorithm for Born probability estimation, including our complete pre-processing routine and sampling, in Algorithm~\ref{alg:complete_algorithm} and illustrate its implementation on a toy circuit in Fig.~\ref{fig:sketch_routine}. 
In the following two sections, we discuss in more detail how the two sub-routines, gate merging and frame optimisation, can be implemented.
For clarity, we focus on qubit circuits and on the frame parametrisations introduced in the previous section, although our methods are general.

\section{Gate merging}
\label{sec:merge}

The central idea of our first sub-routine, gate merging, is that the sampling cost of a merged circuit block consisting of multiple quantum gates is in general lower than sequential sampling of each gate. More precisely, this can be summarised as the following observation:

\begin{observation}\label{obs:merge_gates}
    Let $\{U_1, U_2, \dots, U_k\}$ be a sequence of quantum gates. The negativity of the merged gate $U = U_k \dots U_2 U_1$ is always less or equal to the product of the individual negativities, i.e.,
    \begin{equation}
        N_{U}^{\boldsymbol{\cal G}} \leq \prod_{i=1}^k N_{U_i}^{\boldsymbol{\cal G}},
        \label{eq:merged_gates}
    \end{equation}
    for any frame set $\boldsymbol{\cal G}$ assigned to the gate sequence.
\end{observation}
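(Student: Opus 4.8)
The plan is to reduce the statement to two ingredients: a composition (Chapman--Kolmogorov) identity for the transition matrices, and the submultiplicativity of an induced matrix norm. First I would prove the two-gate case, $N_{VU}^{\boldsymbol{\cal G}} \leq N_V^{\boldsymbol{\cal G}} N_U^{\boldsymbol{\cal G}}$, and then extend to arbitrary $k$ by induction.

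The key step is the composition identity. Writing $\boldsymbol{\cal G}$ for the common frame set on every wire, I would insert a resolution of identity between the two gates. The defining frame property $O = \sum_\blambda \tr[F(\blambda) O]\, G(\blambda)$ applied to $O = U G(\blambda) U^\dagger$ gives
\begin{equation}
U G(\blambda) U^\dagger = \sum_{\blambda''} W_U^{\boldsymbol{\cal G}|\boldsymbol{\cal G}}(\blambda''|\blambda)\, G(\blambda'').
\end{equation}
Substituting this into the definition $W_{VU}^{\boldsymbol{\cal G}|\boldsymbol{\cal G}}(\blambda'|\blambda) = \tr[F(\blambda') V U G(\blambda) U^\dagger V^\dagger]$ and using linearity of the trace yields the matrix-product law
\begin{equation}
W_{VU}^{\boldsymbol{\cal G}|\boldsymbol{\cal G}}(\blambda'|\blambda) = \sum_{\blambda''} W_V^{\boldsymbol{\cal G}|\boldsymbol{\cal G}}(\blambda'|\blambda'')\, W_U^{\boldsymbol{\cal G}|\boldsymbol{\cal G}}(\blambda''|\blambda).
\end{equation}
This is the only place where the frame structure is genuinely used; everything after is elementary.

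The remaining step concerns norms. The negativity $N_U^{\boldsymbol{\cal G}} = \max_\blambda \sum_{\blambda'} |W_U^{\boldsymbol{\cal G}|\boldsymbol{\cal G}}(\blambda'|\blambda)|$ is precisely the maximum absolute column-sum norm of the transition matrix, i.e. the operator norm induced by the $\ell_1$ vector norm, which is submultiplicative. Concretely, I would apply the triangle inequality to the composition law, exchange the order of summation, and bound the inner sum $\sum_{\blambda'} |W_V^{\boldsymbol{\cal G}|\boldsymbol{\cal G}}(\blambda'|\blambda'')|$ by its maximum over $\blambda''$, namely $N_V^{\boldsymbol{\cal G}}$, factoring it out:
\begin{align}
N_{VU}^{\boldsymbol{\cal G}} &\leq \max_\blambda \sum_{\blambda''} \left( \sum_{\blambda'} |W_V^{\boldsymbol{\cal G}|\boldsymbol{\cal G}}(\blambda'|\blambda'')| \right) |W_U^{\boldsymbol{\cal G}|\boldsymbol{\cal G}}(\blambda''|\blambda)| \notag \\
&\leq N_V^{\boldsymbol{\cal G}}\, N_U^{\boldsymbol{\cal G}}.
\end{align}
This settles the two-gate case, which is also exactly the second inequality of Theorem~\ref{thm:MainStatement} with $\boldsymbol{\cal G}' = \boldsymbol{\cal G}$. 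The general bound follows by induction on $k$: assuming the claim for $k-1$ gates, set $U' = U_{k-1}\cdots U_1$ and apply the two-gate result to $U_k$ and $U'$.

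The main obstacle I anticipate is bookkeeping rather than conceptual. Since $U$ and $V$ may act on overlapping but distinct sets of wires, the inserted resolution of identity must cover exactly the wires on which the merged gate acts, with the frames on untouched wires acting trivially; care is needed to index all transition matrices consistently over the full wire set of the merged block before comparing them. Once this is done, the composition law and the submultiplicativity argument go through verbatim, and the result is manifestly independent of the chosen frame parametrisation.
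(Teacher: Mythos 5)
Your proposal is correct and follows essentially the same route as the paper's proof: the composition (Chapman--Kolmogorov) law for the transition quasi-probabilities, followed by the triangle inequality and factoring out the maximum column sum, i.e.\ submultiplicativity of the $\ell_1$-induced norm, extended to $k$ gates by iteration. Your explicit derivation of the composition identity via the frame resolution $O = \sum_\blambda \tr[F(\blambda) O]\, G(\blambda)$ is exactly the step the paper invokes when it writes the merged-gate quasi-probability as a matrix product, so the two arguments coincide.
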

\begin{proof}
    It is sufficient to prove the statement for two gates $U$ and $V$.
    By noting that the quasi-probability of the merged gate is expressed as
    \begin{equation}
        W_{VU}^{\boldsymbol{\cal G}}(\blambda_3 | \blambda_1) = \sum_{\blambda_2} W_{V}^{\boldsymbol{\cal G}}(\blambda_3 | \blambda_2) W_{U}^{\boldsymbol{\cal G}}(\blambda_2 | \blambda_1),
    \end{equation}

    the negativity of the gate can be bounded as
    \begin{equation}
    \begin{aligned}
        N_{VU}^{\boldsymbol{\cal G}} &= \max_{\blambda_1}{\sum_{\blambda_3} \abs{W_{VU}^{\boldsymbol{\cal G}}(\blambda_3 | \blambda_1)}} \\
        &= \max_{\blambda_1}{\sum_{\blambda_3} \abs{\sum_{\blambda_2} W_{V}^{\boldsymbol{\cal G}}(\blambda_3 | \blambda_2) W_{U}^{\boldsymbol{\cal G}}(\blambda_2 | \blambda_1)}} \\
        &\leq \max_{\blambda_1}{\sum_{\blambda_2} \abs{W_{U}^{\boldsymbol{\cal G}}(\blambda_2 | \blambda_1)} \sum_{\blambda_3} \abs{W_{V}^{\boldsymbol{\cal G}}(\blambda_3 | \blambda_2)}} \\
        &\leq N_{U}^{\boldsymbol{\cal G}} \max_{\blambda_2}{\sum_{\blambda_3} \abs{W_{V}^{\boldsymbol{\cal G}}(\blambda_3 | \blambda_2)}} \\
        &= N_{V}^{\boldsymbol{\cal G}} N_{U}^{\boldsymbol{\cal G}}.
    \end{aligned}
    \end{equation}
    We then apply this argument iteratively to any sequence of quantum gates $\{ U_1, U_2, \dots, U_k\}$ to obtain Eq.~\eqref{eq:merged_gates}, which completes the proof.
\end{proof}

Such a negativity reduction can be exemplified by considering the Toffoli gate, which can be optimally decomposed into four $T$ gates~\cite{cody2013} along with Clifford gates and Pauli measurements.
We compare the negativity of the Toffoli gate itself and its decomposed gate sequence using the Pauli frame, where the negativity only comes from non-Clifford gates. One can readily observe that the Toffoli gate negativity $N_{\rm Toffoli}^{\rm Pauli} = 2$ is lower than the total negativity of the decomposed gate sequence $\left[ N_T^{\rm Pauli} \right]^4 = 4$.

The idea of reducing the negativity of quantum gates by merging (Eq.~\eqref{eq:merged_gates}) can be compared to the submultiplicativity of magic state negativity characterised by the robustness measure (${\cal R}$), which obeys ${\cal R}(\rho_1 \otimes \rho_2) \leq {\cal R}(\rho_1) {\cal R}(\rho_2)$  \cite{howard_2017}. In particular, the robustness of the $T$ state is equivalent to the negativity of the $T$ gate from the sampling cost viewpoint, as one $T$ gate can be ``gadgetised'' via Cliffords and a single $T$ state~\cite{Bravyi_2016}. 
In Ref.~\cite{howard_2017}, the asymptotic negativity per single $T$ gate is $\lim_{t \rightarrow \infty}\left[ {\cal R}\left(\ket{T}^{\otimes t}\right) \right]^{1/t} \approx 2^{0.272}$ which provides a lower bound on their sampling runtime $\Omega (4^{0.272t})$.

In order to compare this with the gate merging method, we consider an $n$-qubit block consisting of Clifford+$T$ gates (see Fig.~\ref{fig:merge_fig}(f) for an example with $n=5$). This can be compared to considering $n$ $T$ states in the robustness measure, having the same number of qubits (i.e., the size of Hilbert space) in the block to evaluate the negativity.
\begin{figure}[t]
    \includegraphics[width=1
    \linewidth]{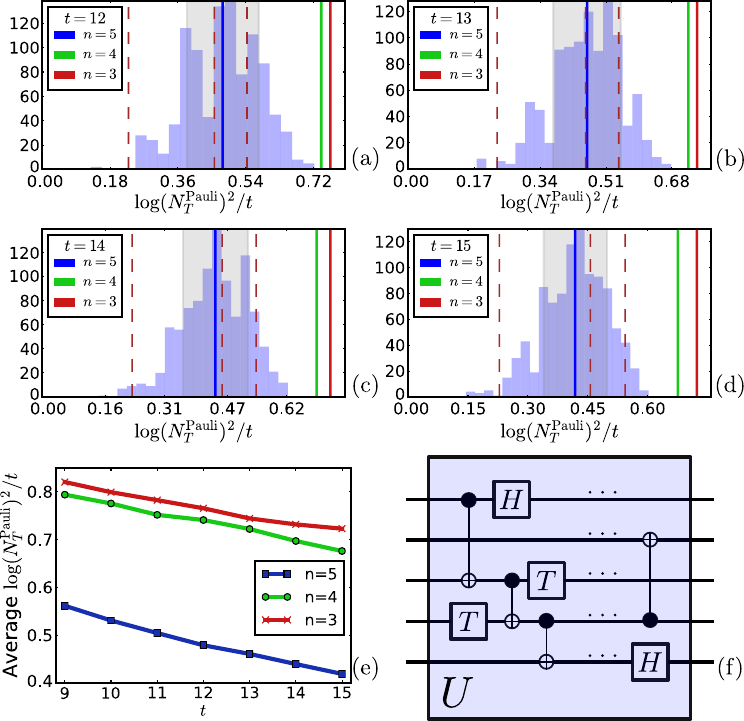}
    \caption{Histograms of $1000$ random Clifford$+T$ circuits with $N=5$ consisting of 100 1-qubit and 2-qubit Clifford gates, supplemented by $t$ $T$ gates and merged using spatial parameter $n=5$.
    The leftmost (blue) solid line with the gray region depict the average and standard deviation of each histogram.
    The brown and green solid lines (from right to left) represent the higher averages of the corresponding histograms for $n=3$ and $4$ respectively.
    Vertical dashed lines provide some state-of-the art scalings, more specifically from left to right: $O(2^{0.228t})$ of the Bravyi-Gosset algorithm from~\cite{Bravyi_2016} based on the stabilizer rank, $O(4^{0.228t})$ of the dyadic frame simulator from~\cite{Seddon2021} and the lower bound $\Omega (4^{0.272t})$ based on the robustness of magic from~\cite{howard_2017}.
    As $t$ increases, we observe a higher frequency of circuits with log negativity squared per $T$ gate lower than the robustness lower bound: (a) $71\%$, (b) $81\%$, (c) $89\%$, (d) $95\%$.\\
    (e) Histogram average for $n=3,4,5$ against $t$.\\
    (f) Example $5$-qubit merged gate $U$ made up from Clifford gate ($CNOT$s and $H$) and $T$ gates.
    }
    \label{fig:merge_fig}
\end{figure}
Figs.~\ref{fig:merge_fig}(a-d) show the distribution of the negativity of $1000$ random $n$-qubit blocks consisting of $100$ Clifford gates and $t$ $T$ gates.
We observe that the negativity per $T$ gate after merging the gate sequences in a random $n$-qubit block can be occasionally lower than the robustness measure of $n$ $T$ states~\cite{howard_2017}.
We also note that the negativity reduction works efficiently when the number of $T$ gate in the block, $t$, increases. For example, when $n=5$ and $t=15$, $95\%$ of the randomly chosen merged blocks yield negativity per $T$ state lower than the robustness measure. 
We also plot in Fig.~\ref{fig:merge_fig}(e) the average negativity per $T$ gate versus $t$, demonstrating that it is decreasing, which implies that our appoach can prove efficient when the structure of the gate block considered becomes more complicated.

The main advantage of our approach is that it is not limited to a particular type of gate set, e.g. Clifford+$T$ circuits, but can be directly applied to any types of quantum gates. The aforementioned approaches using stabilizer rank, robustness and generalised robustness rely on the gadgetisation of a non-Clifford gate using magic states. 
Therefore, evaluating the classical overhead should be preceded by finding an optimal Clifford gadget with minimum resource of magic. 
On the other hand, gate merging does not have such a limitation, so it can be useful when the efficient decomposition of a quantum circuit into non-stabilizer states and Clifford gates is non-trivial. 
We also highlight that merging gates reduces the negativity independently of the choice of frames.

We now describe the gate merging method for a generic $N$-qubit quantum circuit with $L$ gates. This can be done by grouping the quantum circuit into $n$-qubit blocks (see Fig.~\ref{fig:sketch_routine}(a)$\rightarrow$(b)), then Observation~\ref{obs:merge_gates} guarantees that the negativity of each block is reduced after merging the gate sequences in it. 
There are various ways of grouping the circuit into $n$-qubit blocks, but we introduce the iterative Sub-routine~\ref{alg:merge_algorithm} for concreteness.
The broad idea of the sub-routine is to iteratively connect any yet unmerged (disjoint) gates. 
All gates remain in the set ${\cal U}_{\rm disj}$ until they either finally act on $n$ qubits or cannot connect to other gates anymore, when they are move to the output set ${\cal U}_{\rm merged}$.

\begin{subroutine}[H]
\caption{Gate merging}
\label{alg:merge_algorithm}
\begin{algorithmic}[1]
\Statex \textbf{Input:} List of gates ${\cal U} = \{U_1,...,U_{L}\}$ in
\Statex \hspace{4pt} qudit quantum circuit $C$ and spatial
\Statex \hspace{4pt} parameter $n$.
\State Define list of merged gates ${\cal U}_{\rm merged}\gets \{\}$, and 
\Statex \hspace{4pt} list of disjoint gates ${\cal U}_{\rm disj} \gets \{\}$
\For{$U_i \in {\cal U}$} 
\State Set target gate $U_{\rm target} \gets U_i$
    \For{$V \in {\cal U}_{\rm disj}$}
       \If{$U_{\rm target}$ shares a wire with $V$}
        \Statex \hspace{45pt} Remove $V$ from ${\cal U}_{\rm disj}$.
            \If{${\rm rank}(U_{\rm target} V) > d^n$}
            \Statex \hspace{60pt} Add $V$ to ${\cal U}_{\rm merged}$.
            \ElsIf{{${\rm rank}(U_{\rm target} V) \leq d^n$}}
            \Statex \hspace{60pt} $U_{\rm target} \gets U_{\rm target} V$.
            \EndIf
        \EndIf
    \EndFor
    \State Add $U_{\rm target}$ to ${\cal U}_{\rm disj}$.
\EndFor
\For{
$U_i \in {\cal U}_{\rm merged}$}
    \State Set target gate $U_{\rm target} \gets U_i$
    \For{
$V \in {\cal U}_{\rm disj}$}
        \If{
${\rm rank}(U_{\rm target} V) \leq d^n$}
        \Statex \hspace{60pt} 
$U_{\rm target} \gets U_{\rm target} V$.
        \EndIf
    \EndFor
    \State Add $U_{\rm target}$ to ${\cal U}_{\rm disj}$.
\EndFor
\State Append ${\cal U}_{\rm disj}$ to ${\cal U}_{\rm merged}$.
\Statex \textbf{Output:} $\mathcal{U}_{\rm merged}$.
\end{algorithmic}
\end{subroutine}

At every step, a target gate $U_{\rm target}$, the algorithm searches through the disjoint gates to find the next one that is connected to $U_{\rm target}$. We therefore require to search less than $L$ gates for every $U_{\rm target}$, while the cost of merging two gates (i.e., multiplying) is $O(2^{2n})$, a constant as we fix $n < N$. So the full gate merging sub-routine scales as $O(2^{2n}L^2)$. The computational cost to compute the transition matrix $W_U^{\boldsymbol{\cal G}}$ for $n$-qubit unitary $U$ and its negativity also exponentially scales with $n$ as there are ${\cal O}(2^{2n})$ possible phase space points for a $n$-qubit system.

As we can observe from the scaling, the limiting factor of gate merging is the spatial parameter $n$, which stems from the exponential growth of the dimension of Hilbert space by increasing the number of qubits. We find numerically that a practical choice for the spatial parameter $n$ is $n \leq 5$. As this is a fundamental property of a quantum system, a similar issue arises in the robustness measure as evaluating the robustness of ${\cal R}(\ket{T}^{\otimes n})$ and finding its optimal decomposition among ${\cal O}(2^{n^2})$ stabilizer states is in general a challenging task for a large $n$ \cite{howard_2017}.

Due to the computational need to truncate the spatial parameter $n < N$, a question arises of whether there exist new methods of manipulating the circuit frames and further reducing the total negativity, after gate merging is completed. 
We provide a positive answer to this question in the following section, where we describe our second sub-routine, frame optimisation.

\section{Frame optimisation}
\label{sec:opt}

Frame optimisation aims to reduce the total circuit negativity by optimally choosing frames for different circuit components. As we discussed in Section~\ref{subsec:frame_parametrisation}, we can introduce specific frame parametrisations, such as parametrised Wigner frames or rotated Pauli frames, and iteratively choose the frames throughout the circuit.

\begin{subroutine}[H]
\caption{Frame optimisation}
\label{alg:frame_opt}
\begin{algorithmic}[1]
\Statex \textbf{Input:} Quantum circuit $C$ and temporal 
\Statex \hspace{4pt} parameter $\ell$.
\State Determine the total number of frames, $|\boldsymbol{\mathcal{G}}_{\rm opt}|$, 
\Statex \hspace{4pt} in the circuit $C$.
\State Define the set of reference frames, 
\Statex \hspace{4pt} $\boldsymbol{\cal G}_{\rm opt} \gets \{ {\cal G}_1, \dots, {\cal G}_{|\boldsymbol{\mathcal{G}}_{\rm opt}|} \}$.
\State Fix the number of optimisation cycles $c$.
\For{$i = 1, \dots, c$}
    \State Choose a subset $\boldsymbol{\cal G}_{\rm target}^{(i)} \subset \boldsymbol{\cal G}_{\rm opt}$ with at 
    \Statex \hspace{21pt} most $\ell$ frames.
    \State Find a circuit block $\boldsymbol B$ containing the 
    \Statex \hspace{21pt} frames in $\boldsymbol{\cal G}_{\rm target}^{(i)}$.
    \State Find ${\overline{\boldsymbol{\cal G}}}_{\rm target}^{(i)} = {\rm argmin}_{\boldsymbol{\cal G}_{\rm target}^{(i)}} N_{\boldsymbol{B}} \left(\boldsymbol{\cal G}_{\rm target}^{(i)}\right)$.
    \State Update the corresponding frames in $\boldsymbol{\cal G}_{\rm opt}$ 
    \Statex \hspace{21pt} with ${\overline{\boldsymbol{\cal G}}}_{\rm target}^{(i)}$.
\EndFor
\Statex \textbf{Output:} $\boldsymbol{\cal G}_{\rm opt}$.
\end{algorithmic}
\end{subroutine}

In principle, the best strategy in terms of achieving the highest negativity reduction would be to carry out global optimisation over all circuit frames, requiring that the number of parameters to be optimised should scale with the number of qubits $N$ and circuit length $L$.
In this work, we show that a \emph{local} optimisation, with only a fixed number of parameters, is sufficient to achieve considerable negativity reduction and scales only linearly in $N$ and $L$. This optimisation sub-routine is implemented by dividing the circuit into blocks containing at most $\ell$ frames to be optimised, for a fixed temporal parameter $\ell$.

To perform the frame optimisation on a quantum circuit $C$ consisting of an input state $\rho$, a gate sequence $\{U_1,...,U_L\}$ and a measurement effect $E$, we need to start from an initial frame parametrisation.
We denote this parametrisation as $\boldsymbol{\mathcal{G}}_{\rm opt} = \{ {\cal G}_1, \dots, {\cal G}_{|\boldsymbol{\mathcal{G}}_{\rm opt}|} \}$, where $|\boldsymbol{\mathcal{G}}_{\rm opt}|$ is the number of frames to be optimised.
The procedure is outline in Sub-routine~\ref{alg:frame_opt} and explained here.
We take a subset $\boldsymbol{\mathcal{G}}_{\rm target}^{(1)} \subset \boldsymbol{\mathcal{G}}_{\rm opt}$ with up to $\ell$ frames (either sequentially or randomly) and create the block $\mathbf{B}$ of circuit components which are attached to those $\ell$ frames. 
Keeping all other frames in the block $\mathbf{B}$ fixed with the corresponding frames in $\boldsymbol{\calG}_{\rm opt}$, we want to minimise the total negativity of the block $N_{\mathbf{B}}$ over all possible choices for $\boldsymbol{\calG}_{\rm target}$, so that the minimum
\begin{align}
    \min_{\boldsymbol{\calG}_{\rm target}} N_{\mathbf{B}}(\boldsymbol{\calG}_{\rm target}).
\end{align}
occurs at ${\overline{\boldsymbol{\cal G}}}_{\rm target}^{(1)}$ allowing us to update the corresponding frames in $\boldsymbol{\calG}_{\rm opt}$, which is the end of the first cycle in our frame optimisation.
We repeat this process $c$ times by choosing another set of $\ell$ frames as the new $\boldsymbol{\calG}_{\rm target}^{(i)},\ i=1, \dots, c$.
The number of optimisation cycles $c$ can be chosen arbitrarily, for example it can be chosen as $c \geq |\boldsymbol{\mathcal{G}}_{\rm opt}| / \ell$, with the aim of optimising all frames in the circuit at least once.
The order in which frames are optimised can also be chosen arbitrarily and can potentially result in a different overall negativity reduction.

\begin{figure}[t]
    \includegraphics[width=0.97\linewidth]{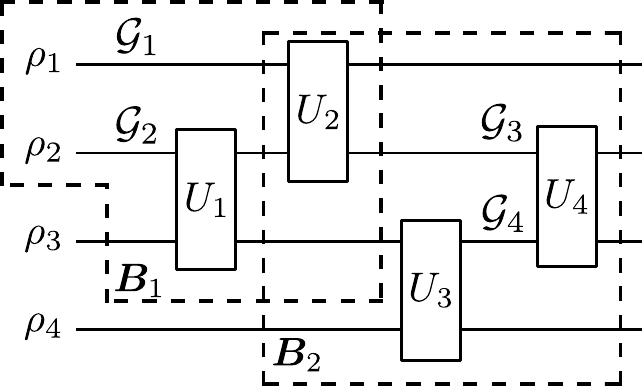}
    \caption{Example of how to form a block when $\boldsymbol{\calG}_{\rm target}$ is given in the case of $n=2$ and $\ell=2$. Only relevant frames are shown. When $\boldsymbol{\calG}_{\rm target}=\{\calG_1,\calG_2\}$, the corresponding block $\mathbf{B}_1$, which contains all circuit elements connected to the frames in $\boldsymbol{\calG}_{\rm target}$, is $\mathbf{B}_1=\{\rho_1,\rho_2,U_1,U_2\}$. When $\boldsymbol{\calG}_{\rm target} = \{ {\calG}_3, {\calG}_4\}$, then the corresponding block is $\mathbf{B}_2=\{U_2,U_3,U_4\}$.
    }
    \label{fig:sketch_block_opt}
\end{figure}

We demonstrate the local frame optimisation method with an example. Let us consider the initial part of a simple general circuit depicted in Fig.~\ref{fig:sketch_block_opt} for the case of $n=2$ and $\ell=2$. To perform the $(i)$-th optimisation cycle we consider $\boldsymbol{\calG}_{\rm target}^{(i)} = \{\calG_1,\calG_2\}$, we consider the corresponding block $\mathbf{B}_1=\{\rho_1,\rho_2,U_1,U_2\}$, which is a set of all circuit components connected to the frames in $\boldsymbol{\calG}_{\rm target}^{(i)}$. 
Then, the explicit optimisation we perform is
\begin{align}
    &\min_{\boldsymbol{\calG}_{\rm target}^{(i)}} N_{\mathbf{B}_1} \left( \boldsymbol{\calG}_{\rm target}^{(i)} \right) = \nonumber\\
    &\min_{\{\calG_1, \calG_2\}} N_{\rho_1} (\calG_1)N_{\rho_2}(\calG_2)N_{U_1}(\calG_2)N_{U_2}(\calG_1),
\end{align}
where $N_X(\calG_X)$ is the negativity of component $X$ as a function of $\calG_X$ with all other frames fixed to the corresponding ones in $\boldsymbol{\calG}_{\rm opt}$. 
As an additional example, we could have considered the set $\boldsymbol{\calG}_{\rm target}^{(i)} = \{{\calG}_3, {\calG}_4\}$ corresponding to the block $\mathbf{B}_2=\{U_2,U_3,U_4\}$ in Fig.~\ref{fig:sketch_block_opt}. 
Then, the block negativity we optimise is
$N_{\mathbf{B}_2}(\boldsymbol{\calG}_{\rm target}^{(i)}) = N_{U_2}({\calG}_3)N_{U_3}({\calG}_4)N_{U_4}({\calG}_3, {\calG}_4)$.

Note that at each optimisation step, previously optimised frames in $\boldsymbol{\calG}_{\rm opt}$ are used in the next optimisation cycle. This ensures that the negativity never increases compared to the initial frame choice $\{ {\cal G}_1, \dots, {\cal G}_{|\boldsymbol{\mathcal{G}}_{\rm opt}|} \}$ between optimisation cycles. 

The presented local optimisation method is efficient in the number of circuit components. Consider an $N$-qubit circuit of length $L$ where each of $L$ gates acts on at most $n$ qubits. Then, there are at most $N+nL$ different frames to be optimised. Since $\ell$ is fixed, each optimisation cycle takes a constant amount of time $O(1)$. Therefore, the frame optimisation of the entire circuit scales as $(N+nL)\times O(1)=O(N,L)$. Note that the exact value depends on truncation parameters $n$ and $\ell$ as well as the specifics of the circuit and its frame parametrisation.

\begin{figure}[!t]
    \includegraphics[width=1
    \linewidth]{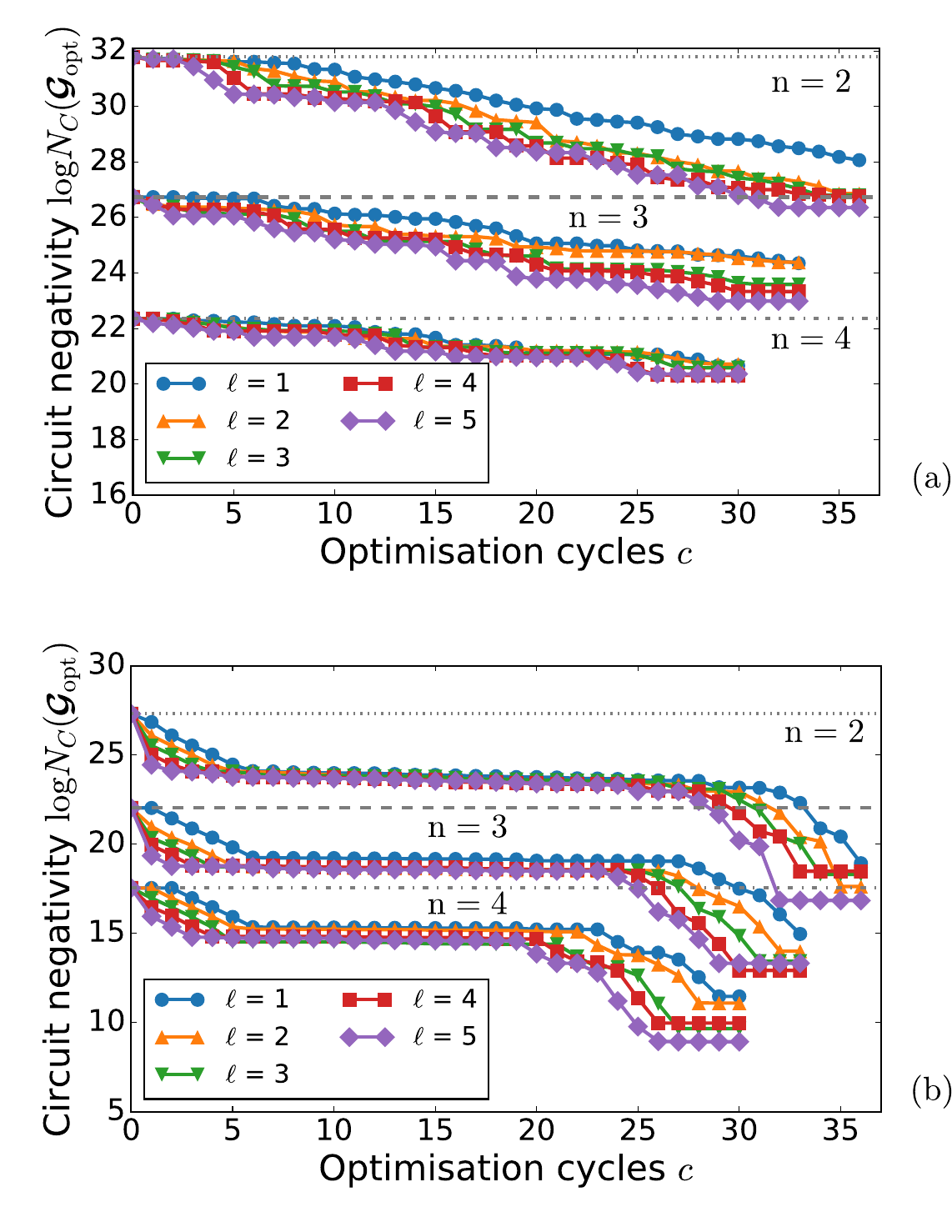}
    \caption{Plots showing negativity reduction of a circuit consisting of 2-qubit Haar-random gates with $N=6$ and $L=15$ after each frame optimisation cycle with different spatial and temporal parameters, $n$ and $\ell$. The optimisation is carried out sequentially from the first frame to the last frame.
    Optimisation is performed via the basin-hopping algorithm as introduced in~\cite{Wales1997}.
    (a) Results after frame optimisation with rotated Pauli frames. The reference frame is the standard Pauli operators. The total negativity continuously decreases as we optimise more frames. (b) Results after frame optimisation with parametrised Wigner frames. The reference frame is the conventional phase-space operators for the Wigner function. The most of negativity reduction occurs near the initial states and the measurements.
    }
    \label{fig:frame_opt_simulation}
\end{figure}

Fig.~\ref{fig:frame_opt_simulation} shows the performance of the frame optimisation for a circuit with $N=6$ and $L=15$ consisting of 2-qubit Haar-random unitaries, which are in general difficult to be simulated with stabiliser-based simulators because they do not admit efficient decompositions. In Fig.~\ref{fig:frame_opt_simulation}(a), we use rotated Pauli frames as our frame parametrisation and initialised each frame in the circuit to the set of standard qubit Pauli operators. In Fig.~\ref{fig:frame_opt_simulation}(b), we choose parametrised Wigner frames as our frame parametrisation and initialise each frame in the circuit to the set of conventional phase-space operators corresponding to $g(\blambda)=1$ (see Sec.~\ref{subsec:examples_frame_para}). We can observe that the largest negativity reduction comes from gate merging with higher $n$, but the frame optimisation also achieves a significant negativity reduction. In general, larger $\ell$ results in lower negativity after optimisation of all frames with fixed $n$. In the case of parametrised Wigner frames, together with gate merging, we could considerably decrease the initial log-negativity from $\sim$27.3 to $\sim$8.9 with truncation parameters $n=4$ and $\ell=5$, which means that we need $\sim 2^{2\times18}$ times less samples to reach a given accuracy for probability estimation.

\begin{figure}[t]
    \includegraphics[width=1
    \linewidth]{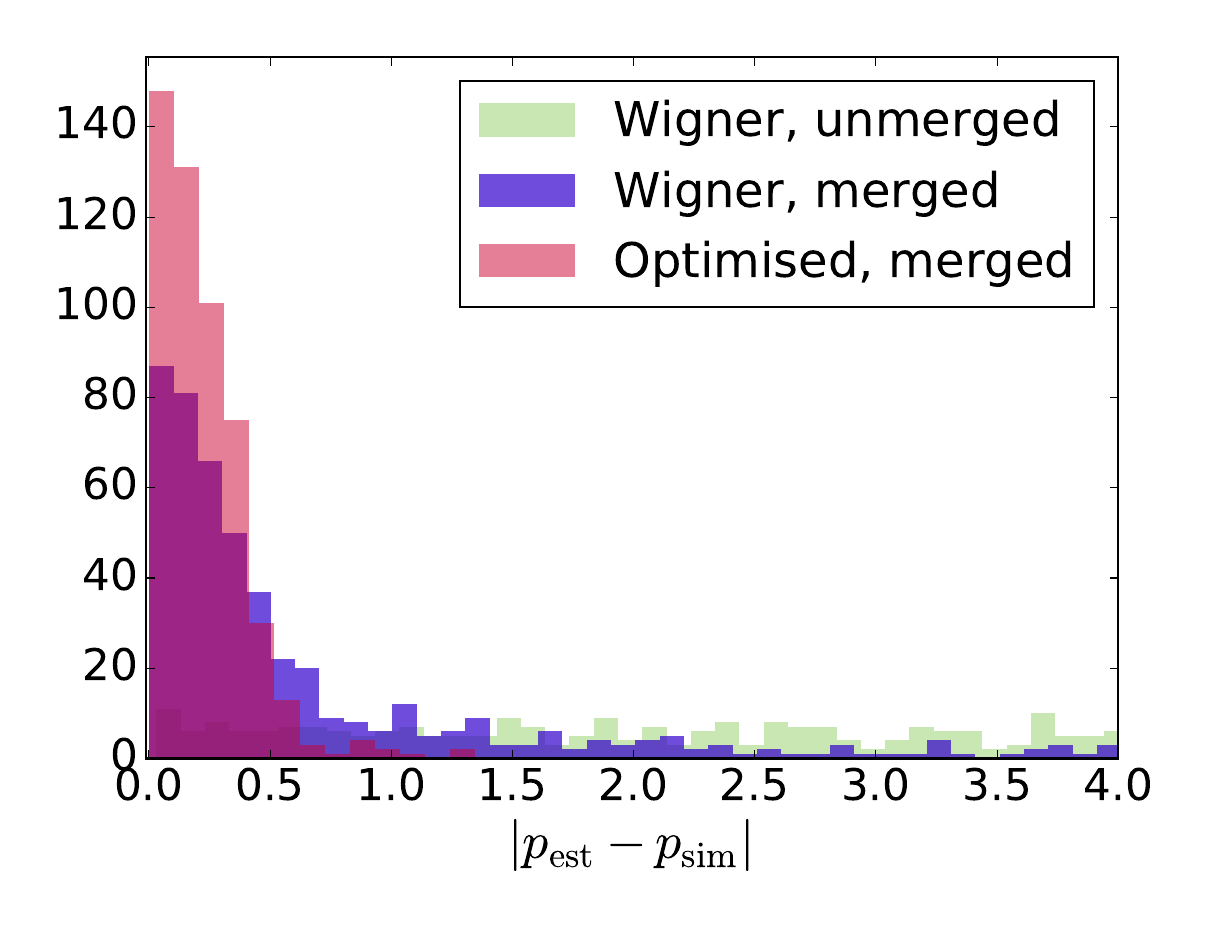}
    \caption{Histograms of the deviation of estimated probability $p_{\rm est}$ from actual outcome probability $p_{\rm sim}$ (as calculated by Qiskit~\cite{Qiskit}) for 500 circuits consisting of 2-qubit Haar-random gates with $N=3$, $L=8$ and $\ell = 1$.
    The number of samples taken for each circuit is $10^6$, which took around 10 seconds on a standard computer.
    The plot shown is truncated at $|p_{\rm est} - p_{\rm sim}| = 4.0$ to demonstrate the advantage of our routines clearly.
    The advantage is amplified as $N$ and $L$ increase.
    }
    \label{fig:sampling_hist}
\end{figure}

We demonstrate the practical significance of our routine, by sampling 500 circuits consisting of Haar-random gates, with the results presented in Fig.~\ref{fig:sampling_hist}.
Unmerged circuits represented entirely by Wigner frames do not show any signs of convergence to the actual probability distribution.
Merged circuits clearly converge a lot better, especially when their frame representation is optimised.

\section{Conclusion}
\label{sec:summary}

We introduce two classical sub-routines, gate merging and frame parametrisation, which reduce the total negativity in the quasi-probability representation of a quantum circuit, hence leading to sampling overhead reduction.
We emphasise that our methods are very general; they are not restricted to specific choices of frames or frame parametrisations, and can be applicable to any circuit independently of generating gate sets or the purity and dimension of its input qudits.
Both sub-routines are efficient in the sense that the runtime scales polynomially in the circuit size $N$ and number of gates $L$.

We numerically demonstrate that both methods improve the exponential scaling of the circuit negativity by testing them on Clifford+$T$ circuits and circuits with Haar-random gates.
Specifically, gate merging is shown to compete on average with the quasi-probability simulators based on dyadic frames and the robustness of magic.
Frame optimisation can further compliment gate merging in reducing negativity, when merging gates in the circuits is no longer practical due to the growing size of the gates.

A clear direction for our work is to improve the classical optimisation performed for the frame representation.
Our parametrisation resembles variational techniques used in near-term quantum algorithms~\cite{cerezo_variational_2021}, although our cost function, circuit negativity, is calculated classically.
Our optimisation could therefore potentially benefit by research on variational techniques, such as identifying ``good'' circuit-inspired ansatze for initialising frames or investigating barren plateaus in order to improve optimisation convergence.
Such methods could shed light on what families of circuits are hardest to sample from using quasi-probability techniques.

One can also investigate the possibility of performing frame optimisation analytically, at least for particular classes of quantum circuits. 
Additional assumptions will likely be required for the circuit structure, but finding optimal frames analytically would eliminate the hidden constant runtime costs of ``black-box'' classical algorithms currently employed for the optimisation.
For example, it would be particularly useful to investigate the existence of a finite set of frames as a function of circuit components resulting in minimum negativity for Clifford+$T$ circuits.

\section*{Acknowledgements}

This work is supported by the KIST Open Research Program and the UK Hub in Quantum Computing and Simulation, part of the UK National Quantum Technologies Programme with funding from UKRI EPSRC grant EP/T001062/1.
NK is supported by the EPSRC Centre for Doctoral Training on Controlled Quantum Dynamics.
HK is supported by the KIAS Individual Grant No. CG085301 at Korea Institute for Advanced Study.
HHJ is supported by the Centre for Doctoral Training on Controlled Quantum Dynamics funded by the EPSRC (EP/L016524/1).
DJ is supported by the Royal Society and a University
Academic Fellowship.
MSK acknowledges the Samsung GRC grant.

\section*{Code availability}

The code that implements the sub-routines and supports the presented numerical findings can be accessed at NK's GitHub repository: \\ \url{https://github.com/nkoukou/parameterised_negativity}.

\bibliographystyle{apsrev4-1}
\bibliography{algo}

\providecommand{\noopsort}[1]{}\providecommand{\singleletter}[1]{#1}
\begin{thebibliography}{67}%
\makeatletter
\providecommand \@ifxundefined [1]{%
 \@ifx{#1\undefined}
}%
\providecommand \@ifnum [1]{%
 \ifnum #1\expandafter \@firstoftwo
 \else \expandafter \@secondoftwo
 \fi
}%
\providecommand \@ifx [1]{%
 \ifx #1\expandafter \@firstoftwo
 \else \expandafter \@secondoftwo
 \fi
}%
\providecommand \natexlab [1]{#1}%
\providecommand \enquote  [1]{``#1''}%
\providecommand \bibnamefont  [1]{#1}%
\providecommand \bibfnamefont [1]{#1}%
\providecommand \citenamefont [1]{#1}%
\providecommand \href@noop [0]{\@secondoftwo}%
\providecommand \href [0]{\begingroup \@sanitize@url \@href}%
\providecommand \@href[1]{\@@startlink{#1}\@@href}%
\providecommand \@@href[1]{\endgroup#1\@@endlink}%
\providecommand \@sanitize@url [0]{\catcode `\\12\catcode `\$12\catcode
  `\&12\catcode `\#12\catcode `\^12\catcode `\_12\catcode `\%12\relax}%
\providecommand \@@startlink[1]{}%
\providecommand \@@endlink[0]{}%
\providecommand \url  [0]{\begingroup\@sanitize@url \@url }%
\providecommand \@url [1]{\endgroup\@href {#1}{\urlprefix }}%
\providecommand \urlprefix  [0]{URL }%
\providecommand \Eprint [0]{\href }%
\providecommand \doibase [0]{http://dx.doi.org/}%
\providecommand \selectlanguage [0]{\@gobble}%
\providecommand \bibinfo  [0]{\@secondoftwo}%
\providecommand \bibfield  [0]{\@secondoftwo}%
\providecommand \translation [1]{[#1]}%
\providecommand \BibitemOpen [0]{}%
\providecommand \bibitemStop [0]{}%
\providecommand \bibitemNoStop [0]{.\EOS\space}%
\providecommand \EOS [0]{\spacefactor3000\relax}%
\providecommand \BibitemShut  [1]{\csname bibitem#1\endcsname}%
\let\auto@bib@innerbib\@empty
\bibitem [{\citenamefont {Feynman}(1982)}]{feynman_simulating_1982}%
  \BibitemOpen
  \bibfield  {author} {\bibinfo {author} {\bibfnamefont {R.~P.}\ \bibnamefont
  {Feynman}},\ }\href {\doibase 10.1007/BF02650179} {\bibfield  {journal}
  {\bibinfo  {journal} {International Journal of Theoretical Physics}\ }\textbf
  {\bibinfo {volume} {21}},\ \bibinfo {pages} {467} (\bibinfo {year}
  {1982})}\BibitemShut {NoStop}%
\bibitem [{\citenamefont {Preskill}(2021)}]{preskill2021quantum}%
  \BibitemOpen
  \bibfield  {author} {\bibinfo {author} {\bibfnamefont {J.}~\bibnamefont
  {Preskill}},\ }\href@noop {} {\enquote {\bibinfo {title} {Quantum computing
  40 years later},}\ } (\bibinfo {year} {2021}),\ \Eprint
  {http://arxiv.org/abs/2106.10522} {arXiv:2106.10522} \BibitemShut {NoStop}%
\bibitem [{\citenamefont {Jozsa}\ and\ \citenamefont {den
  Nest}(2014)}]{Jozsa2014Classical}%
  \BibitemOpen
  \bibfield  {author} {\bibinfo {author} {\bibfnamefont {R.}~\bibnamefont
  {Jozsa}}\ and\ \bibinfo {author} {\bibfnamefont {M.~V.}\ \bibnamefont {den
  Nest}},\ }\href@noop {} {\bibfield  {journal} {\bibinfo  {journal} {Quantum
  Inf. Comput.}\ }\textbf {\bibinfo {volume} {14}},\ \bibinfo {pages} {633}
  (\bibinfo {year} {2014})}\BibitemShut {NoStop}%
\bibitem [{\citenamefont {Koh}(2017)}]{koh2017}%
  \BibitemOpen
  \bibfield  {author} {\bibinfo {author} {\bibfnamefont {D.~E.}\ \bibnamefont
  {Koh}},\ }\href@noop {} {\bibfield  {journal} {\bibinfo  {journal} {Quantum
  Info. Comput.}\ }\textbf {\bibinfo {volume} {17}},\ \bibinfo {pages}
  {262–282} (\bibinfo {year} {2017})}\BibitemShut {NoStop}%
\bibitem [{\citenamefont {Aaronson}\ \emph {et~al.}(2017)\citenamefont
  {Aaronson}, \citenamefont {Bouland}, \citenamefont {Kuperberg},\ and\
  \citenamefont {Mehraban}}]{aaronson2017}%
  \BibitemOpen
  \bibfield  {author} {\bibinfo {author} {\bibfnamefont {S.}~\bibnamefont
  {Aaronson}}, \bibinfo {author} {\bibfnamefont {A.}~\bibnamefont {Bouland}},
  \bibinfo {author} {\bibfnamefont {G.}~\bibnamefont {Kuperberg}}, \ and\
  \bibinfo {author} {\bibfnamefont {S.}~\bibnamefont {Mehraban}}\ }(\bibinfo
  {publisher} {Association for Computing Machinery},\ \bibinfo {address} {New
  York, NY, USA},\ \bibinfo {year} {2017})\ p.\ \bibinfo {pages}
  {317–327}\BibitemShut {NoStop}%
\bibitem [{\citenamefont {Hebenstreit}\ \emph {et~al.}(2020)\citenamefont
  {Hebenstreit}, \citenamefont {Jozsa}, \citenamefont {Kraus},\ and\
  \citenamefont {Strelchuk}}]{Hebenstreit2020}%
  \BibitemOpen
  \bibfield  {author} {\bibinfo {author} {\bibfnamefont {M.}~\bibnamefont
  {Hebenstreit}}, \bibinfo {author} {\bibfnamefont {R.}~\bibnamefont {Jozsa}},
  \bibinfo {author} {\bibfnamefont {B.}~\bibnamefont {Kraus}}, \ and\ \bibinfo
  {author} {\bibfnamefont {S.}~\bibnamefont {Strelchuk}},\ }\href {\doibase
  10.1103/PhysRevA.102.052604} {\bibfield  {journal} {\bibinfo  {journal}
  {Phys. Rev. A}\ }\textbf {\bibinfo {volume} {102}},\ \bibinfo {pages}
  {052604} (\bibinfo {year} {2020})}\BibitemShut {NoStop}%
\bibitem [{\citenamefont {Yoganathan}\ \emph {et~al.}(2019)\citenamefont
  {Yoganathan}, \citenamefont {Jozsa},\ and\ \citenamefont
  {Strelchuk}}]{Yoganathan2019}%
  \BibitemOpen
  \bibfield  {author} {\bibinfo {author} {\bibfnamefont {M.}~\bibnamefont
  {Yoganathan}}, \bibinfo {author} {\bibfnamefont {R.}~\bibnamefont {Jozsa}}, \
  and\ \bibinfo {author} {\bibfnamefont {S.}~\bibnamefont {Strelchuk}},\ }\href
  {\doibase 10.1098/rspa.2018.0427} {\bibfield  {journal} {\bibinfo  {journal}
  {Proceedings of the Royal Society A: Mathematical, Physical and Engineering
  Sciences}\ }\textbf {\bibinfo {volume} {475}},\ \bibinfo {pages} {20180427}
  (\bibinfo {year} {2019})}\BibitemShut {NoStop}%
\bibitem [{\citenamefont {Aaronson}\ and\ \citenamefont
  {Arkhipov}(2010)}]{aaronson2010computational}%
  \BibitemOpen
  \bibfield  {author} {\bibinfo {author} {\bibfnamefont {S.}~\bibnamefont
  {Aaronson}}\ and\ \bibinfo {author} {\bibfnamefont {A.}~\bibnamefont
  {Arkhipov}},\ }\href@noop {} {\enquote {\bibinfo {title} {The computational
  complexity of linear optics},}\ } (\bibinfo {year} {2010}),\ \Eprint
  {http://arxiv.org/abs/1011.3245} {arXiv:1011.3245} \BibitemShut {NoStop}%
\bibitem [{\citenamefont {Bremner}\ \emph {et~al.}(2011)\citenamefont
  {Bremner}, \citenamefont {Jozsa},\ and\ \citenamefont
  {Shepherd}}]{Bremner2011}%
  \BibitemOpen
  \bibfield  {author} {\bibinfo {author} {\bibfnamefont {M.~J.}\ \bibnamefont
  {Bremner}}, \bibinfo {author} {\bibfnamefont {R.}~\bibnamefont {Jozsa}}, \
  and\ \bibinfo {author} {\bibfnamefont {D.~J.}\ \bibnamefont {Shepherd}},\
  }\href {\doibase 10.1098/rspa.2010.0301} {\bibfield  {journal} {\bibinfo
  {journal} {Proceedings of the Royal Society A: Mathematical, Physical and
  Engineering Sciences}\ }\textbf {\bibinfo {volume} {467}},\ \bibinfo {pages}
  {459} (\bibinfo {year} {2011})}\BibitemShut {NoStop}%
\bibitem [{\citenamefont {Morimae}\ \emph {et~al.}(2014)\citenamefont
  {Morimae}, \citenamefont {Fujii},\ and\ \citenamefont
  {Fitzsimons}}]{Morimae2014}%
  \BibitemOpen
  \bibfield  {author} {\bibinfo {author} {\bibfnamefont {T.}~\bibnamefont
  {Morimae}}, \bibinfo {author} {\bibfnamefont {K.}~\bibnamefont {Fujii}}, \
  and\ \bibinfo {author} {\bibfnamefont {J.~F.}\ \bibnamefont {Fitzsimons}},\
  }\href {\doibase 10.1103/PhysRevLett.112.130502} {\bibfield  {journal}
  {\bibinfo  {journal} {Phys. Rev. Lett.}\ }\textbf {\bibinfo {volume} {112}},\
  \bibinfo {pages} {130502} (\bibinfo {year} {2014})}\BibitemShut {NoStop}%
\bibitem [{\citenamefont {Bremner}\ \emph {et~al.}(2016)\citenamefont
  {Bremner}, \citenamefont {Montanaro},\ and\ \citenamefont
  {Shepherd}}]{Bremner2016}%
  \BibitemOpen
  \bibfield  {author} {\bibinfo {author} {\bibfnamefont {M.~J.}\ \bibnamefont
  {Bremner}}, \bibinfo {author} {\bibfnamefont {A.}~\bibnamefont {Montanaro}},
  \ and\ \bibinfo {author} {\bibfnamefont {D.~J.}\ \bibnamefont {Shepherd}},\
  }\href {\doibase 10.1103/PhysRevLett.117.080501} {\bibfield  {journal}
  {\bibinfo  {journal} {Phys. Rev. Lett.}\ }\textbf {\bibinfo {volume} {117}},\
  \bibinfo {pages} {080501} (\bibinfo {year} {2016})}\BibitemShut {NoStop}%
\bibitem [{\citenamefont {Gao}\ \emph {et~al.}(2017)\citenamefont {Gao},
  \citenamefont {Wang},\ and\ \citenamefont {Duan}}]{Gao2017}%
  \BibitemOpen
  \bibfield  {author} {\bibinfo {author} {\bibfnamefont {X.}~\bibnamefont
  {Gao}}, \bibinfo {author} {\bibfnamefont {S.-T.}\ \bibnamefont {Wang}}, \
  and\ \bibinfo {author} {\bibfnamefont {L.-M.}\ \bibnamefont {Duan}},\ }\href
  {\doibase 10.1103/PhysRevLett.118.040502} {\bibfield  {journal} {\bibinfo
  {journal} {Phys. Rev. Lett.}\ }\textbf {\bibinfo {volume} {118}},\ \bibinfo
  {pages} {040502} (\bibinfo {year} {2017})}\BibitemShut {NoStop}%
\bibitem [{\citenamefont {Bermejo-Vega}\ \emph {et~al.}(2018)\citenamefont
  {Bermejo-Vega}, \citenamefont {Hangleiter}, \citenamefont {Schwarz},
  \citenamefont {Raussendorf},\ and\ \citenamefont {Eisert}}]{Bermejo2018}%
  \BibitemOpen
  \bibfield  {author} {\bibinfo {author} {\bibfnamefont {J.}~\bibnamefont
  {Bermejo-Vega}}, \bibinfo {author} {\bibfnamefont {D.}~\bibnamefont
  {Hangleiter}}, \bibinfo {author} {\bibfnamefont {M.}~\bibnamefont {Schwarz}},
  \bibinfo {author} {\bibfnamefont {R.}~\bibnamefont {Raussendorf}}, \ and\
  \bibinfo {author} {\bibfnamefont {J.}~\bibnamefont {Eisert}},\ }\href
  {\doibase 10.1103/PhysRevX.8.021010} {\bibfield  {journal} {\bibinfo
  {journal} {Phys. Rev. X}\ }\textbf {\bibinfo {volume} {8}},\ \bibinfo {pages}
  {021010} (\bibinfo {year} {2018})}\BibitemShut {NoStop}%
\bibitem [{\citenamefont {Fujii}\ \emph {et~al.}(2018)\citenamefont {Fujii},
  \citenamefont {Kobayashi}, \citenamefont {Morimae}, \citenamefont
  {Nishimura}, \citenamefont {Tamate},\ and\ \citenamefont {Tani}}]{fujii2018}%
  \BibitemOpen
  \bibfield  {author} {\bibinfo {author} {\bibfnamefont {K.}~\bibnamefont
  {Fujii}}, \bibinfo {author} {\bibfnamefont {H.}~\bibnamefont {Kobayashi}},
  \bibinfo {author} {\bibfnamefont {T.}~\bibnamefont {Morimae}}, \bibinfo
  {author} {\bibfnamefont {H.}~\bibnamefont {Nishimura}}, \bibinfo {author}
  {\bibfnamefont {S.}~\bibnamefont {Tamate}}, \ and\ \bibinfo {author}
  {\bibfnamefont {S.}~\bibnamefont {Tani}},\ }\href {\doibase
  10.1103/PhysRevLett.120.200502} {\bibfield  {journal} {\bibinfo  {journal}
  {Phys. Rev. Lett.}\ }\textbf {\bibinfo {volume} {120}},\ \bibinfo {pages}
  {200502} (\bibinfo {year} {2018})}\BibitemShut {NoStop}%
\bibitem [{\citenamefont {Bouland}\ \emph {et~al.}(2018)\citenamefont
  {Bouland}, \citenamefont {Fitzsimons},\ and\ \citenamefont
  {Koh}}]{bouland2018}%
  \BibitemOpen
  \bibfield  {author} {\bibinfo {author} {\bibfnamefont {A.}~\bibnamefont
  {Bouland}}, \bibinfo {author} {\bibfnamefont {J.~F.}\ \bibnamefont
  {Fitzsimons}}, \ and\ \bibinfo {author} {\bibfnamefont {D.~E.}\ \bibnamefont
  {Koh}}\ }(\bibinfo  {publisher} {Schloss Dagstuhl--Leibniz-Zentrum fuer
  Informatik},\ \bibinfo {address} {Dagstuhl, DEU},\ \bibinfo {year}
  {2018})\BibitemShut {NoStop}%
\bibitem [{\citenamefont {Boixo}\ \emph {et~al.}(2018)\citenamefont {Boixo},
  \citenamefont {Isakov}, \citenamefont {Smelyanskiy}, \citenamefont {Babbush},
  \citenamefont {Ding}, \citenamefont {Jiang}, \citenamefont {Bremner},
  \citenamefont {Martinis},\ and\ \citenamefont
  {Neven}}]{boixo_characterizing_2018}%
  \BibitemOpen
  \bibfield  {author} {\bibinfo {author} {\bibfnamefont {S.}~\bibnamefont
  {Boixo}}, \bibinfo {author} {\bibfnamefont {S.~V.}\ \bibnamefont {Isakov}},
  \bibinfo {author} {\bibfnamefont {V.~N.}\ \bibnamefont {Smelyanskiy}},
  \bibinfo {author} {\bibfnamefont {R.}~\bibnamefont {Babbush}}, \bibinfo
  {author} {\bibfnamefont {N.}~\bibnamefont {Ding}}, \bibinfo {author}
  {\bibfnamefont {Z.}~\bibnamefont {Jiang}}, \bibinfo {author} {\bibfnamefont
  {M.~J.}\ \bibnamefont {Bremner}}, \bibinfo {author} {\bibfnamefont {J.~M.}\
  \bibnamefont {Martinis}}, \ and\ \bibinfo {author} {\bibfnamefont
  {H.}~\bibnamefont {Neven}},\ }\href {\doibase 10.1038/s41567-018-0124-x}
  {\bibfield  {journal} {\bibinfo  {journal} {Nature Physics}\ }\textbf
  {\bibinfo {volume} {14}},\ \bibinfo {pages} {595} (\bibinfo {year}
  {2018})}\BibitemShut {NoStop}%
\bibitem [{\citenamefont {Pashayan}\ \emph {et~al.}(2020)\citenamefont
  {Pashayan}, \citenamefont {Bartlett},\ and\ \citenamefont
  {Gross}}]{Pashayan2020estimation}%
  \BibitemOpen
  \bibfield  {author} {\bibinfo {author} {\bibfnamefont {H.}~\bibnamefont
  {Pashayan}}, \bibinfo {author} {\bibfnamefont {S.~D.}\ \bibnamefont
  {Bartlett}}, \ and\ \bibinfo {author} {\bibfnamefont {D.}~\bibnamefont
  {Gross}},\ }\href {\doibase 10.22331/q-2020-01-13-223} {\bibfield  {journal}
  {\bibinfo  {journal} {{Quantum}}\ }\textbf {\bibinfo {volume} {4}},\ \bibinfo
  {pages} {223} (\bibinfo {year} {2020})}\BibitemShut {NoStop}%
\bibitem [{\citenamefont {Valiant}(2002)}]{Valiant2002}%
  \BibitemOpen
  \bibfield  {author} {\bibinfo {author} {\bibfnamefont {L.~G.}\ \bibnamefont
  {Valiant}},\ }\href {\doibase 10.1137/S0097539700377025} {\bibfield
  {journal} {\bibinfo  {journal} {SIAM Journal on Computing}\ }\textbf
  {\bibinfo {volume} {31}},\ \bibinfo {pages} {1229} (\bibinfo {year}
  {2002})}\BibitemShut {NoStop}%
\bibitem [{\citenamefont {Terhal}\ and\ \citenamefont
  {DiVincenzo}(2002)}]{Terhal2002}%
  \BibitemOpen
  \bibfield  {author} {\bibinfo {author} {\bibfnamefont {B.~M.}\ \bibnamefont
  {Terhal}}\ and\ \bibinfo {author} {\bibfnamefont {D.~P.}\ \bibnamefont
  {DiVincenzo}},\ }\href {\doibase 10.1103/PhysRevA.65.032325} {\bibfield
  {journal} {\bibinfo  {journal} {Phys. Rev. A}\ }\textbf {\bibinfo {volume}
  {65}},\ \bibinfo {pages} {032325} (\bibinfo {year} {2002})}\BibitemShut
  {NoStop}%
\bibitem [{\citenamefont {Bartlett}\ \emph {et~al.}(2002)\citenamefont
  {Bartlett}, \citenamefont {Sanders}, \citenamefont {Braunstein},\ and\
  \citenamefont {Nemoto}}]{Bartlett2002}%
  \BibitemOpen
  \bibfield  {author} {\bibinfo {author} {\bibfnamefont {S.~D.}\ \bibnamefont
  {Bartlett}}, \bibinfo {author} {\bibfnamefont {B.~C.}\ \bibnamefont
  {Sanders}}, \bibinfo {author} {\bibfnamefont {S.~L.}\ \bibnamefont
  {Braunstein}}, \ and\ \bibinfo {author} {\bibfnamefont {K.}~\bibnamefont
  {Nemoto}},\ }\href {\doibase 10.1103/PhysRevLett.88.097904} {\bibfield
  {journal} {\bibinfo  {journal} {Phys. Rev. Lett.}\ }\textbf {\bibinfo
  {volume} {88}},\ \bibinfo {pages} {097904} (\bibinfo {year}
  {2002})}\BibitemShut {NoStop}%
\bibitem [{\citenamefont {Aaronson}\ and\ \citenamefont
  {Gottesman}(2004)}]{Aaronson2004}%
  \BibitemOpen
  \bibfield  {author} {\bibinfo {author} {\bibfnamefont {S.}~\bibnamefont
  {Aaronson}}\ and\ \bibinfo {author} {\bibfnamefont {D.}~\bibnamefont
  {Gottesman}},\ }\href {\doibase 10.1103/PhysRevA.70.052328} {\bibfield
  {journal} {\bibinfo  {journal} {Phys. Rev. A}\ }\textbf {\bibinfo {volume}
  {70}},\ \bibinfo {pages} {052328} (\bibinfo {year} {2004})}\BibitemShut
  {NoStop}%
\bibitem [{\citenamefont {Gross}\ \emph {et~al.}(2009)\citenamefont {Gross},
  \citenamefont {Flammia},\ and\ \citenamefont {Eisert}}]{Gross2009}%
  \BibitemOpen
  \bibfield  {author} {\bibinfo {author} {\bibfnamefont {D.}~\bibnamefont
  {Gross}}, \bibinfo {author} {\bibfnamefont {S.~T.}\ \bibnamefont {Flammia}},
  \ and\ \bibinfo {author} {\bibfnamefont {J.}~\bibnamefont {Eisert}},\ }\href
  {\doibase 10.1103/PhysRevLett.102.190501} {\bibfield  {journal} {\bibinfo
  {journal} {Phys. Rev. Lett.}\ }\textbf {\bibinfo {volume} {102}},\ \bibinfo
  {pages} {190501} (\bibinfo {year} {2009})}\BibitemShut {NoStop}%
\bibitem [{\citenamefont {Brod}(2016)}]{Brod2016}%
  \BibitemOpen
  \bibfield  {author} {\bibinfo {author} {\bibfnamefont {D.~J.}\ \bibnamefont
  {Brod}},\ }\href {\doibase 10.1103/PhysRevA.93.062332} {\bibfield  {journal}
  {\bibinfo  {journal} {Phys. Rev. A}\ }\textbf {\bibinfo {volume} {93}},\
  \bibinfo {pages} {062332} (\bibinfo {year} {2016})}\BibitemShut {NoStop}%
\bibitem [{\citenamefont {Haug}\ and\ \citenamefont {Kim}(2022)}]{haug2022}%
  \BibitemOpen
  \bibfield  {author} {\bibinfo {author} {\bibfnamefont {T.}~\bibnamefont
  {Haug}}\ and\ \bibinfo {author} {\bibfnamefont {M.~S.}\ \bibnamefont {Kim}},\
  }\href {\doibase 10.48550/ARXIV.2204.10061} {\enquote {\bibinfo {title}
  {Scalable measures of magic for quantum computers},}\ } (\bibinfo {year}
  {2022})\BibitemShut {NoStop}%
\bibitem [{\citenamefont {Zhong}\ \emph {et~al.}(2020)\citenamefont {Zhong},
  \citenamefont {Wang}, \citenamefont {Deng}, \citenamefont {Chen},
  \citenamefont {Peng}, \citenamefont {Luo}, \citenamefont {Qin}, \citenamefont
  {Wu}, \citenamefont {Ding}, \citenamefont {Hu},\ and\ \citenamefont
  {et~al.}}]{Zhong2020}%
  \BibitemOpen
  \bibfield  {author} {\bibinfo {author} {\bibfnamefont {H.-S.}\ \bibnamefont
  {Zhong}}, \bibinfo {author} {\bibfnamefont {H.}~\bibnamefont {Wang}},
  \bibinfo {author} {\bibfnamefont {Y.-H.}\ \bibnamefont {Deng}}, \bibinfo
  {author} {\bibfnamefont {M.-C.}\ \bibnamefont {Chen}}, \bibinfo {author}
  {\bibfnamefont {L.-C.}\ \bibnamefont {Peng}}, \bibinfo {author}
  {\bibfnamefont {Y.-H.}\ \bibnamefont {Luo}}, \bibinfo {author} {\bibfnamefont
  {J.}~\bibnamefont {Qin}}, \bibinfo {author} {\bibfnamefont {D.}~\bibnamefont
  {Wu}}, \bibinfo {author} {\bibfnamefont {X.}~\bibnamefont {Ding}}, \bibinfo
  {author} {\bibfnamefont {Y.}~\bibnamefont {Hu}}, \ and\ \bibinfo {author}
  {\bibnamefont {et~al.}},\ }\href {\doibase 10.1126/science.abe8770}
  {\bibfield  {journal} {\bibinfo  {journal} {Science}\ }\textbf {\bibinfo
  {volume} {370}},\ \bibinfo {pages} {1460–1463} (\bibinfo {year}
  {2020})}\BibitemShut {NoStop}%
\bibitem [{\citenamefont {Arute}\ \emph {et~al.}(2019)\citenamefont {Arute},
  \citenamefont {Arya}, \citenamefont {Babbush}, \citenamefont {Bacon},
  \citenamefont {Bardin}, \citenamefont {Barends}, \citenamefont {Biswas},
  \citenamefont {Boixo}, \citenamefont {Brandao}, \citenamefont {Buell},
  \citenamefont {Burkett}, \citenamefont {Chen}, \citenamefont {Chen},
  \citenamefont {Chiaro}, \citenamefont {Collins}, \citenamefont {Courtney},
  \citenamefont {Dunsworth}, \citenamefont {Farhi}, \citenamefont {Foxen},
  \citenamefont {Fowler}, \citenamefont {Gidney}, \citenamefont {Giustina},
  \citenamefont {Graff}, \citenamefont {Guerin}, \citenamefont {Habegger},
  \citenamefont {Harrigan}, \citenamefont {Hartmann}, \citenamefont {Ho},
  \citenamefont {Hoffmann}, \citenamefont {Huang}, \citenamefont {Humble},
  \citenamefont {Isakov}, \citenamefont {Jeffrey}, \citenamefont {Jiang},
  \citenamefont {Kafri}, \citenamefont {Kechedzhi}, \citenamefont {Kelly},
  \citenamefont {Klimov}, \citenamefont {Knysh}, \citenamefont {Korotkov},
  \citenamefont {Kostritsa}, \citenamefont {Landhuis}, \citenamefont
  {Lindmark}, \citenamefont {Lucero}, \citenamefont {Lyakh}, \citenamefont
  {Mandrà}, \citenamefont {{McClean}}, \citenamefont {{McEwen}}, \citenamefont
  {Megrant}, \citenamefont {Mi}, \citenamefont {Michielsen}, \citenamefont
  {Mohseni}, \citenamefont {Mutus}, \citenamefont {Naaman}, \citenamefont
  {Neeley}, \citenamefont {Neill}, \citenamefont {Niu}, \citenamefont {Ostby},
  \citenamefont {Petukhov}, \citenamefont {Platt}, \citenamefont {Quintana},
  \citenamefont {Rieffel}, \citenamefont {Roushan}, \citenamefont {Rubin},
  \citenamefont {Sank}, \citenamefont {Satzinger}, \citenamefont {Smelyanskiy},
  \citenamefont {Sung}, \citenamefont {Trevithick}, \citenamefont
  {Vainsencher}, \citenamefont {Villalonga}, \citenamefont {White},
  \citenamefont {Yao}, \citenamefont {Yeh}, \citenamefont {Zalcman},
  \citenamefont {Neven},\ and\ \citenamefont {Martinis}}]{arute_quantum_2019}%
  \BibitemOpen
  \bibfield  {author} {\bibinfo {author} {\bibfnamefont {F.}~\bibnamefont
  {Arute}}, \bibinfo {author} {\bibfnamefont {K.}~\bibnamefont {Arya}},
  \bibinfo {author} {\bibfnamefont {R.}~\bibnamefont {Babbush}}, \bibinfo
  {author} {\bibfnamefont {D.}~\bibnamefont {Bacon}}, \bibinfo {author}
  {\bibfnamefont {J.~C.}\ \bibnamefont {Bardin}}, \bibinfo {author}
  {\bibfnamefont {R.}~\bibnamefont {Barends}}, \bibinfo {author} {\bibfnamefont
  {R.}~\bibnamefont {Biswas}}, \bibinfo {author} {\bibfnamefont
  {S.}~\bibnamefont {Boixo}}, \bibinfo {author} {\bibfnamefont {F.~G. S.~L.}\
  \bibnamefont {Brandao}}, \bibinfo {author} {\bibfnamefont {D.~A.}\
  \bibnamefont {Buell}}, \bibinfo {author} {\bibfnamefont {B.}~\bibnamefont
  {Burkett}}, \bibinfo {author} {\bibfnamefont {Y.}~\bibnamefont {Chen}},
  \bibinfo {author} {\bibfnamefont {Z.}~\bibnamefont {Chen}}, \bibinfo {author}
  {\bibfnamefont {B.}~\bibnamefont {Chiaro}}, \bibinfo {author} {\bibfnamefont
  {R.}~\bibnamefont {Collins}}, \bibinfo {author} {\bibfnamefont
  {W.}~\bibnamefont {Courtney}}, \bibinfo {author} {\bibfnamefont
  {A.}~\bibnamefont {Dunsworth}}, \bibinfo {author} {\bibfnamefont
  {E.}~\bibnamefont {Farhi}}, \bibinfo {author} {\bibfnamefont
  {B.}~\bibnamefont {Foxen}}, \bibinfo {author} {\bibfnamefont
  {A.}~\bibnamefont {Fowler}}, \bibinfo {author} {\bibfnamefont
  {C.}~\bibnamefont {Gidney}}, \bibinfo {author} {\bibfnamefont
  {M.}~\bibnamefont {Giustina}}, \bibinfo {author} {\bibfnamefont
  {R.}~\bibnamefont {Graff}}, \bibinfo {author} {\bibfnamefont
  {K.}~\bibnamefont {Guerin}}, \bibinfo {author} {\bibfnamefont
  {S.}~\bibnamefont {Habegger}}, \bibinfo {author} {\bibfnamefont {M.~P.}\
  \bibnamefont {Harrigan}}, \bibinfo {author} {\bibfnamefont {M.~J.}\
  \bibnamefont {Hartmann}}, \bibinfo {author} {\bibfnamefont {A.}~\bibnamefont
  {Ho}}, \bibinfo {author} {\bibfnamefont {M.}~\bibnamefont {Hoffmann}},
  \bibinfo {author} {\bibfnamefont {T.}~\bibnamefont {Huang}}, \bibinfo
  {author} {\bibfnamefont {T.~S.}\ \bibnamefont {Humble}}, \bibinfo {author}
  {\bibfnamefont {S.~V.}\ \bibnamefont {Isakov}}, \bibinfo {author}
  {\bibfnamefont {E.}~\bibnamefont {Jeffrey}}, \bibinfo {author} {\bibfnamefont
  {Z.}~\bibnamefont {Jiang}}, \bibinfo {author} {\bibfnamefont
  {D.}~\bibnamefont {Kafri}}, \bibinfo {author} {\bibfnamefont
  {K.}~\bibnamefont {Kechedzhi}}, \bibinfo {author} {\bibfnamefont
  {J.}~\bibnamefont {Kelly}}, \bibinfo {author} {\bibfnamefont {P.~V.}\
  \bibnamefont {Klimov}}, \bibinfo {author} {\bibfnamefont {S.}~\bibnamefont
  {Knysh}}, \bibinfo {author} {\bibfnamefont {A.}~\bibnamefont {Korotkov}},
  \bibinfo {author} {\bibfnamefont {F.}~\bibnamefont {Kostritsa}}, \bibinfo
  {author} {\bibfnamefont {D.}~\bibnamefont {Landhuis}}, \bibinfo {author}
  {\bibfnamefont {M.}~\bibnamefont {Lindmark}}, \bibinfo {author}
  {\bibfnamefont {E.}~\bibnamefont {Lucero}}, \bibinfo {author} {\bibfnamefont
  {D.}~\bibnamefont {Lyakh}}, \bibinfo {author} {\bibfnamefont
  {S.}~\bibnamefont {Mandrà}}, \bibinfo {author} {\bibfnamefont {J.~R.}\
  \bibnamefont {{McClean}}}, \bibinfo {author} {\bibfnamefont {M.}~\bibnamefont
  {{McEwen}}}, \bibinfo {author} {\bibfnamefont {A.}~\bibnamefont {Megrant}},
  \bibinfo {author} {\bibfnamefont {X.}~\bibnamefont {Mi}}, \bibinfo {author}
  {\bibfnamefont {K.}~\bibnamefont {Michielsen}}, \bibinfo {author}
  {\bibfnamefont {M.}~\bibnamefont {Mohseni}}, \bibinfo {author} {\bibfnamefont
  {J.}~\bibnamefont {Mutus}}, \bibinfo {author} {\bibfnamefont
  {O.}~\bibnamefont {Naaman}}, \bibinfo {author} {\bibfnamefont
  {M.}~\bibnamefont {Neeley}}, \bibinfo {author} {\bibfnamefont
  {C.}~\bibnamefont {Neill}}, \bibinfo {author} {\bibfnamefont {M.~Y.}\
  \bibnamefont {Niu}}, \bibinfo {author} {\bibfnamefont {E.}~\bibnamefont
  {Ostby}}, \bibinfo {author} {\bibfnamefont {A.}~\bibnamefont {Petukhov}},
  \bibinfo {author} {\bibfnamefont {J.~C.}\ \bibnamefont {Platt}}, \bibinfo
  {author} {\bibfnamefont {C.}~\bibnamefont {Quintana}}, \bibinfo {author}
  {\bibfnamefont {E.~G.}\ \bibnamefont {Rieffel}}, \bibinfo {author}
  {\bibfnamefont {P.}~\bibnamefont {Roushan}}, \bibinfo {author} {\bibfnamefont
  {N.~C.}\ \bibnamefont {Rubin}}, \bibinfo {author} {\bibfnamefont
  {D.}~\bibnamefont {Sank}}, \bibinfo {author} {\bibfnamefont {K.~J.}\
  \bibnamefont {Satzinger}}, \bibinfo {author} {\bibfnamefont {V.}~\bibnamefont
  {Smelyanskiy}}, \bibinfo {author} {\bibfnamefont {K.~J.}\ \bibnamefont
  {Sung}}, \bibinfo {author} {\bibfnamefont {M.~D.}\ \bibnamefont
  {Trevithick}}, \bibinfo {author} {\bibfnamefont {A.}~\bibnamefont
  {Vainsencher}}, \bibinfo {author} {\bibfnamefont {B.}~\bibnamefont
  {Villalonga}}, \bibinfo {author} {\bibfnamefont {T.}~\bibnamefont {White}},
  \bibinfo {author} {\bibfnamefont {Z.~J.}\ \bibnamefont {Yao}}, \bibinfo
  {author} {\bibfnamefont {P.}~\bibnamefont {Yeh}}, \bibinfo {author}
  {\bibfnamefont {A.}~\bibnamefont {Zalcman}}, \bibinfo {author} {\bibfnamefont
  {H.}~\bibnamefont {Neven}}, \ and\ \bibinfo {author} {\bibfnamefont {J.~M.}\
  \bibnamefont {Martinis}},\ }\href {\doibase 10.1038/s41586-019-1666-5}
  {\bibfield  {journal} {\bibinfo  {journal} {Nature}\ }\textbf {\bibinfo
  {volume} {574}},\ \bibinfo {pages} {505} (\bibinfo {year}
  {2019})}\BibitemShut {NoStop}%
\bibitem [{\citenamefont {Bernien}\ \emph {et~al.}(2017)\citenamefont
  {Bernien}, \citenamefont {Schwartz}, \citenamefont {Keesling}, \citenamefont
  {Levine}, \citenamefont {Omran}, \citenamefont {Pichler}, \citenamefont
  {Choi}, \citenamefont {Zibrov}, \citenamefont {Endres}, \citenamefont
  {Greiner},\ and\ \citenamefont {et~al.}}]{Bernien2017}%
  \BibitemOpen
  \bibfield  {author} {\bibinfo {author} {\bibfnamefont {H.}~\bibnamefont
  {Bernien}}, \bibinfo {author} {\bibfnamefont {S.}~\bibnamefont {Schwartz}},
  \bibinfo {author} {\bibfnamefont {A.}~\bibnamefont {Keesling}}, \bibinfo
  {author} {\bibfnamefont {H.}~\bibnamefont {Levine}}, \bibinfo {author}
  {\bibfnamefont {A.}~\bibnamefont {Omran}}, \bibinfo {author} {\bibfnamefont
  {H.}~\bibnamefont {Pichler}}, \bibinfo {author} {\bibfnamefont
  {S.}~\bibnamefont {Choi}}, \bibinfo {author} {\bibfnamefont {A.~S.}\
  \bibnamefont {Zibrov}}, \bibinfo {author} {\bibfnamefont {M.}~\bibnamefont
  {Endres}}, \bibinfo {author} {\bibfnamefont {M.}~\bibnamefont {Greiner}}, \
  and\ \bibinfo {author} {\bibnamefont {et~al.}},\ }\href
  {http://dx.doi.org/10.1038/nature24622} {\bibfield  {journal} {\bibinfo
  {journal} {Nature}\ }\textbf {\bibinfo {volume} {551}},\ \bibinfo {pages}
  {579–584} (\bibinfo {year} {2017})}\BibitemShut {NoStop}%
\bibitem [{\citenamefont {Neill}\ \emph {et~al.}(2018)\citenamefont {Neill},
  \citenamefont {Roushan}, \citenamefont {Kechedzhi}, \citenamefont {Boixo},
  \citenamefont {Isakov}, \citenamefont {Smelyanskiy}, \citenamefont {Megrant},
  \citenamefont {Chiaro}, \citenamefont {Dunsworth}, \citenamefont {Arya},\
  and\ \citenamefont {et~al.}}]{Neill2018}%
  \BibitemOpen
  \bibfield  {author} {\bibinfo {author} {\bibfnamefont {C.}~\bibnamefont
  {Neill}}, \bibinfo {author} {\bibfnamefont {P.}~\bibnamefont {Roushan}},
  \bibinfo {author} {\bibfnamefont {K.}~\bibnamefont {Kechedzhi}}, \bibinfo
  {author} {\bibfnamefont {S.}~\bibnamefont {Boixo}}, \bibinfo {author}
  {\bibfnamefont {S.~V.}\ \bibnamefont {Isakov}}, \bibinfo {author}
  {\bibfnamefont {V.}~\bibnamefont {Smelyanskiy}}, \bibinfo {author}
  {\bibfnamefont {A.}~\bibnamefont {Megrant}}, \bibinfo {author} {\bibfnamefont
  {B.}~\bibnamefont {Chiaro}}, \bibinfo {author} {\bibfnamefont
  {A.}~\bibnamefont {Dunsworth}}, \bibinfo {author} {\bibfnamefont
  {K.}~\bibnamefont {Arya}}, \ and\ \bibinfo {author} {\bibnamefont {et~al.}},\
  }\href {\doibase 10.1126/science.aao4309} {\bibfield  {journal} {\bibinfo
  {journal} {Science}\ }\textbf {\bibinfo {volume} {360}},\ \bibinfo {pages}
  {195–199} (\bibinfo {year} {2018})}\BibitemShut {NoStop}%
\bibitem [{\citenamefont {Bonilla~Ataides}\ \emph {et~al.}(2021)\citenamefont
  {Bonilla~Ataides}, \citenamefont {Tuckett}, \citenamefont {Bartlett},
  \citenamefont {Flammia},\ and\ \citenamefont
  {Brown}}]{bonilla_ataides_xzzx_2021}%
  \BibitemOpen
  \bibfield  {author} {\bibinfo {author} {\bibfnamefont {J.~P.}\ \bibnamefont
  {Bonilla~Ataides}}, \bibinfo {author} {\bibfnamefont {D.~K.}\ \bibnamefont
  {Tuckett}}, \bibinfo {author} {\bibfnamefont {S.~D.}\ \bibnamefont
  {Bartlett}}, \bibinfo {author} {\bibfnamefont {S.~T.}\ \bibnamefont
  {Flammia}}, \ and\ \bibinfo {author} {\bibfnamefont {B.~J.}\ \bibnamefont
  {Brown}},\ }\href {\doibase 10.1038/s41467-021-22274-1} {\bibfield  {journal}
  {\bibinfo  {journal} {Nature Communications}\ }\textbf {\bibinfo {volume}
  {12}},\ \bibinfo {pages} {2172} (\bibinfo {year} {2021})}\BibitemShut
  {NoStop}%
\bibitem [{\citenamefont {Bravyi}\ \emph {et~al.}(2021)\citenamefont {Bravyi},
  \citenamefont {Sheldon}, \citenamefont {Kandala}, \citenamefont {Mckay},\
  and\ \citenamefont {Gambetta}}]{Bravyi_2021}%
  \BibitemOpen
  \bibfield  {author} {\bibinfo {author} {\bibfnamefont {S.}~\bibnamefont
  {Bravyi}}, \bibinfo {author} {\bibfnamefont {S.}~\bibnamefont {Sheldon}},
  \bibinfo {author} {\bibfnamefont {A.}~\bibnamefont {Kandala}}, \bibinfo
  {author} {\bibfnamefont {D.~C.}\ \bibnamefont {Mckay}}, \ and\ \bibinfo
  {author} {\bibfnamefont {J.~M.}\ \bibnamefont {Gambetta}},\ }\href
  {https://link.aps.org/doi/10.1103/PhysRevA.103.042605} {\bibfield  {journal}
  {\bibinfo  {journal} {Phys. Rev. A}\ }\textbf {\bibinfo {volume} {103}},\
  \bibinfo {pages} {042605} (\bibinfo {year} {2021})}\BibitemShut {NoStop}%
\bibitem [{\citenamefont {Kandala}\ \emph {et~al.}(2019)\citenamefont
  {Kandala}, \citenamefont {Temme}, \citenamefont {Córcoles}, \citenamefont
  {Mezzacapo}, \citenamefont {Chow},\ and\ \citenamefont
  {Gambetta}}]{Kandala_2019}%
  \BibitemOpen
  \bibfield  {author} {\bibinfo {author} {\bibfnamefont {A.}~\bibnamefont
  {Kandala}}, \bibinfo {author} {\bibfnamefont {K.}~\bibnamefont {Temme}},
  \bibinfo {author} {\bibfnamefont {A.~D.}\ \bibnamefont {Córcoles}}, \bibinfo
  {author} {\bibfnamefont {A.}~\bibnamefont {Mezzacapo}}, \bibinfo {author}
  {\bibfnamefont {J.~M.}\ \bibnamefont {Chow}}, \ and\ \bibinfo {author}
  {\bibfnamefont {J.~M.}\ \bibnamefont {Gambetta}},\ }\href
  {https://www.nature.com/articles/s41586-019-1040-7} {\bibfield  {journal}
  {\bibinfo  {journal} {Nature}\ }\textbf {\bibinfo {volume} {567}},\ \bibinfo
  {pages} {491} (\bibinfo {year} {2019})}\BibitemShut {NoStop}%
\bibitem [{\citenamefont {Endo}\ \emph {et~al.}(2018)\citenamefont {Endo},
  \citenamefont {Benjamin},\ and\ \citenamefont {Li}}]{Endo_2018}%
  \BibitemOpen
  \bibfield  {author} {\bibinfo {author} {\bibfnamefont {S.}~\bibnamefont
  {Endo}}, \bibinfo {author} {\bibfnamefont {S.~C.}\ \bibnamefont {Benjamin}},
  \ and\ \bibinfo {author} {\bibfnamefont {Y.}~\bibnamefont {Li}},\ }\href
  {\doibase 10.1103/PhysRevX.8.031027} {\bibfield  {journal} {\bibinfo
  {journal} {Phys. Rev. X}\ }\textbf {\bibinfo {volume} {8}},\ \bibinfo {pages}
  {031027} (\bibinfo {year} {2018})}\BibitemShut {NoStop}%
\bibitem [{\citenamefont {Temme}\ \emph {et~al.}(2017)\citenamefont {Temme},
  \citenamefont {Bravyi},\ and\ \citenamefont {Gambetta}}]{Temme_2017}%
  \BibitemOpen
  \bibfield  {author} {\bibinfo {author} {\bibfnamefont {K.}~\bibnamefont
  {Temme}}, \bibinfo {author} {\bibfnamefont {S.}~\bibnamefont {Bravyi}}, \
  and\ \bibinfo {author} {\bibfnamefont {J.~M.}\ \bibnamefont {Gambetta}},\
  }\href {\doibase 10.1103/PhysRevLett.119.180509} {\bibfield  {journal}
  {\bibinfo  {journal} {Phys. Rev. Lett.}\ }\textbf {\bibinfo {volume} {119}},\
  \bibinfo {pages} {180509} (\bibinfo {year} {2017})}\BibitemShut {NoStop}%
\bibitem [{\citenamefont {Preskill}(2018)}]{preskill_nisq}%
  \BibitemOpen
  \bibfield  {author} {\bibinfo {author} {\bibfnamefont {J.}~\bibnamefont
  {Preskill}},\ }\href {\doibase 10.22331/q-2018-08-06-79} {\bibfield
  {journal} {\bibinfo  {journal} {{Quantum}}\ }\textbf {\bibinfo {volume}
  {2}},\ \bibinfo {pages} {79} (\bibinfo {year} {2018})}\BibitemShut {NoStop}%
\bibitem [{\citenamefont {Harrow}\ and\ \citenamefont
  {Montanaro}(2017)}]{harrow_quantum_2017}%
  \BibitemOpen
  \bibfield  {author} {\bibinfo {author} {\bibfnamefont {A.~W.}\ \bibnamefont
  {Harrow}}\ and\ \bibinfo {author} {\bibfnamefont {A.}~\bibnamefont
  {Montanaro}},\ }\href {\doibase 10.1038/nature23458} {\bibfield  {journal}
  {\bibinfo  {journal} {Nature}\ }\textbf {\bibinfo {volume} {549}},\ \bibinfo
  {pages} {203} (\bibinfo {year} {2017})}\BibitemShut {NoStop}%
\bibitem [{\citenamefont {Bennink}\ \emph {et~al.}(2017)\citenamefont
  {Bennink}, \citenamefont {Ferragut}, \citenamefont {Humble}, \citenamefont
  {Laska}, \citenamefont {Nutaro}, \citenamefont {Pleszkoch},\ and\
  \citenamefont {Pooser}}]{Bennick_2017}%
  \BibitemOpen
  \bibfield  {author} {\bibinfo {author} {\bibfnamefont {R.~S.}\ \bibnamefont
  {Bennink}}, \bibinfo {author} {\bibfnamefont {E.~M.}\ \bibnamefont
  {Ferragut}}, \bibinfo {author} {\bibfnamefont {T.~S.}\ \bibnamefont
  {Humble}}, \bibinfo {author} {\bibfnamefont {J.~A.}\ \bibnamefont {Laska}},
  \bibinfo {author} {\bibfnamefont {J.~J.}\ \bibnamefont {Nutaro}}, \bibinfo
  {author} {\bibfnamefont {M.~G.}\ \bibnamefont {Pleszkoch}}, \ and\ \bibinfo
  {author} {\bibfnamefont {R.~C.}\ \bibnamefont {Pooser}},\ }\href
  {https://link.aps.org/doi/10.1103/PhysRevA.95.062337} {\bibfield  {journal}
  {\bibinfo  {journal} {Phys. Rev. A}\ }\textbf {\bibinfo {volume} {95}},\
  \bibinfo {pages} {062337} (\bibinfo {year} {2017})}\BibitemShut {NoStop}%
\bibitem [{\citenamefont {Veitch}\ \emph {et~al.}(2012)\citenamefont {Veitch},
  \citenamefont {Ferrie}, \citenamefont {Gross},\ and\ \citenamefont
  {Emerson}}]{cit:veitch}%
  \BibitemOpen
  \bibfield  {author} {\bibinfo {author} {\bibfnamefont {V.}~\bibnamefont
  {Veitch}}, \bibinfo {author} {\bibfnamefont {C.}~\bibnamefont {Ferrie}},
  \bibinfo {author} {\bibfnamefont {D.}~\bibnamefont {Gross}}, \ and\ \bibinfo
  {author} {\bibfnamefont {J.}~\bibnamefont {Emerson}},\ }\href
  {https://doi.org/10.1088/1367-2630/14/11/113011} {\bibfield  {journal}
  {\bibinfo  {journal} {New Journal of Physics}\ }\textbf {\bibinfo {volume}
  {14}},\ \bibinfo {pages} {113011} (\bibinfo {year} {2012})}\BibitemShut
  {NoStop}%
\bibitem [{\citenamefont {Gottesman}(2006)}]{Gottesman_thesis}%
  \BibitemOpen
  \bibfield  {author} {\bibinfo {author} {\bibfnamefont {D.}~\bibnamefont
  {Gottesman}},\ }in\ \href {\doibase
  https://doi.org/10.1016/B0-12-512666-2/00273-X} {\emph {\bibinfo {booktitle}
  {Encyclopedia of Mathematical Physics}}},\ \bibinfo {editor} {edited by\
  \bibinfo {editor} {\bibfnamefont {J.-P.}\ \bibnamefont {Françoise}},
  \bibinfo {editor} {\bibfnamefont {G.~L.}\ \bibnamefont {Naber}}, \ and\
  \bibinfo {editor} {\bibfnamefont {T.~S.}\ \bibnamefont {Tsun}}}\ (\bibinfo
  {publisher} {Academic Press},\ \bibinfo {address} {Oxford},\ \bibinfo {year}
  {2006})\ pp.\ \bibinfo {pages} {196 -- 201}\BibitemShut {NoStop}%
\bibitem [{\citenamefont {Pashayan}\ \emph {et~al.}(2021)\citenamefont
  {Pashayan}, \citenamefont {Reardon-Smith}, \citenamefont {Korzekwa},\ and\
  \citenamefont {Bartlett}}]{pashayan2021fast}%
  \BibitemOpen
  \bibfield  {author} {\bibinfo {author} {\bibfnamefont {H.}~\bibnamefont
  {Pashayan}}, \bibinfo {author} {\bibfnamefont {O.}~\bibnamefont
  {Reardon-Smith}}, \bibinfo {author} {\bibfnamefont {K.}~\bibnamefont
  {Korzekwa}}, \ and\ \bibinfo {author} {\bibfnamefont {S.~D.}\ \bibnamefont
  {Bartlett}},\ }\href@noop {} {\enquote {\bibinfo {title} {Fast estimation of
  outcome probabilities for quantum circuits},}\ } (\bibinfo {year} {2021}),\
  \Eprint {http://arxiv.org/abs/2101.12223} {arXiv:2101.12223} \BibitemShut
  {NoStop}%
\bibitem [{\citenamefont {Seddon}\ \emph {et~al.}(2021)\citenamefont {Seddon},
  \citenamefont {Regula}, \citenamefont {Pashayan}, \citenamefont {Ouyang},\
  and\ \citenamefont {Campbell}}]{Seddon2021}%
  \BibitemOpen
  \bibfield  {author} {\bibinfo {author} {\bibfnamefont {J.~R.}\ \bibnamefont
  {Seddon}}, \bibinfo {author} {\bibfnamefont {B.}~\bibnamefont {Regula}},
  \bibinfo {author} {\bibfnamefont {H.}~\bibnamefont {Pashayan}}, \bibinfo
  {author} {\bibfnamefont {Y.}~\bibnamefont {Ouyang}}, \ and\ \bibinfo {author}
  {\bibfnamefont {E.~T.}\ \bibnamefont {Campbell}},\ }\href
  {https://link.aps.org/doi/10.1103/PRXQuantum.2.010345} {\bibfield  {journal}
  {\bibinfo  {journal} {PRX Quantum}\ }\textbf {\bibinfo {volume} {2}},\
  \bibinfo {pages} {010345} (\bibinfo {year} {2021})}\BibitemShut {NoStop}%
\bibitem [{\citenamefont {Bravyi}\ and\ \citenamefont
  {Gosset}(2016)}]{Bravyi_2016}%
  \BibitemOpen
  \bibfield  {author} {\bibinfo {author} {\bibfnamefont {S.}~\bibnamefont
  {Bravyi}}\ and\ \bibinfo {author} {\bibfnamefont {D.}~\bibnamefont
  {Gosset}},\ }\href {\doibase 10.1103/PhysRevLett.116.250501} {\bibfield
  {journal} {\bibinfo  {journal} {Phys. Rev. Lett.}\ }\textbf {\bibinfo
  {volume} {116}},\ \bibinfo {pages} {250501} (\bibinfo {year}
  {2016})}\BibitemShut {NoStop}%
\bibitem [{\citenamefont {Bravyi}\ \emph {et~al.}(2019)\citenamefont {Bravyi},
  \citenamefont {Browne}, \citenamefont {Calpin}, \citenamefont {Campbell},
  \citenamefont {Gosset},\ and\ \citenamefont {Howard}}]{Bravyi2019}%
  \BibitemOpen
  \bibfield  {author} {\bibinfo {author} {\bibfnamefont {S.}~\bibnamefont
  {Bravyi}}, \bibinfo {author} {\bibfnamefont {D.}~\bibnamefont {Browne}},
  \bibinfo {author} {\bibfnamefont {P.}~\bibnamefont {Calpin}}, \bibinfo
  {author} {\bibfnamefont {E.}~\bibnamefont {Campbell}}, \bibinfo {author}
  {\bibfnamefont {D.}~\bibnamefont {Gosset}}, \ and\ \bibinfo {author}
  {\bibfnamefont {M.}~\bibnamefont {Howard}},\ }\href
  {https://doi.org/10.22331/q-2019-09-02-181} {\bibfield  {journal} {\bibinfo
  {journal} {{Quantum}}\ }\textbf {\bibinfo {volume} {3}},\ \bibinfo {pages}
  {181} (\bibinfo {year} {2019})}\BibitemShut {NoStop}%
\bibitem [{\citenamefont {Bravyi}\ \emph {et~al.}(2016)\citenamefont {Bravyi},
  \citenamefont {Smith},\ and\ \citenamefont {Smolin}}]{Bravyi2016}%
  \BibitemOpen
  \bibfield  {author} {\bibinfo {author} {\bibfnamefont {S.}~\bibnamefont
  {Bravyi}}, \bibinfo {author} {\bibfnamefont {G.}~\bibnamefont {Smith}}, \
  and\ \bibinfo {author} {\bibfnamefont {J.~A.}\ \bibnamefont {Smolin}},\
  }\href {\doibase 10.1103/PhysRevX.6.021043} {\bibfield  {journal} {\bibinfo
  {journal} {Phys. Rev. X}\ }\textbf {\bibinfo {volume} {6}},\ \bibinfo {pages}
  {021043} (\bibinfo {year} {2016})}\BibitemShut {NoStop}%
\bibitem [{\citenamefont {Qassim}\ \emph {et~al.}(2019)\citenamefont {Qassim},
  \citenamefont {Wallman},\ and\ \citenamefont {Emerson}}]{Qassim2019}%
  \BibitemOpen
  \bibfield  {author} {\bibinfo {author} {\bibfnamefont {H.}~\bibnamefont
  {Qassim}}, \bibinfo {author} {\bibfnamefont {J.~J.}\ \bibnamefont {Wallman}},
  \ and\ \bibinfo {author} {\bibfnamefont {J.}~\bibnamefont {Emerson}},\ }\href
  {\doibase 10.22331/q-2019-08-05-170} {\bibfield  {journal} {\bibinfo
  {journal} {{Quantum}}\ }\textbf {\bibinfo {volume} {3}},\ \bibinfo {pages}
  {170} (\bibinfo {year} {2019})}\BibitemShut {NoStop}%
\bibitem [{\citenamefont {Huang}\ and\ \citenamefont
  {Love}(2019)}]{huang2019approximate}%
  \BibitemOpen
  \bibfield  {author} {\bibinfo {author} {\bibfnamefont {Y.}~\bibnamefont
  {Huang}}\ and\ \bibinfo {author} {\bibfnamefont {P.}~\bibnamefont {Love}},\
  }\href {\doibase 10.1103/PhysRevA.99.052307} {\bibfield  {journal} {\bibinfo
  {journal} {Phys. Rev. A}\ }\textbf {\bibinfo {volume} {99}},\ \bibinfo
  {pages} {052307} (\bibinfo {year} {2019})}\BibitemShut {NoStop}%
\bibitem [{\citenamefont {Kocia}\ and\ \citenamefont
  {Sarovar}(2021)}]{Kocia2021}%
  \BibitemOpen
  \bibfield  {author} {\bibinfo {author} {\bibfnamefont {L.}~\bibnamefont
  {Kocia}}\ and\ \bibinfo {author} {\bibfnamefont {M.}~\bibnamefont
  {Sarovar}},\ }\href {\doibase 10.1103/PhysRevA.103.022603} {\bibfield
  {journal} {\bibinfo  {journal} {Phys. Rev. A}\ }\textbf {\bibinfo {volume}
  {103}},\ \bibinfo {pages} {022603} (\bibinfo {year} {2021})}\BibitemShut
  {NoStop}%
\bibitem [{\citenamefont {Kissinger}\ \emph {et~al.}(2022)\citenamefont
  {Kissinger}, \citenamefont {van~de Wetering},\ and\ \citenamefont
  {Vilmart}}]{kissinger2022classical}%
  \BibitemOpen
  \bibfield  {author} {\bibinfo {author} {\bibfnamefont {A.}~\bibnamefont
  {Kissinger}}, \bibinfo {author} {\bibfnamefont {J.}~\bibnamefont {van~de
  Wetering}}, \ and\ \bibinfo {author} {\bibfnamefont {R.}~\bibnamefont
  {Vilmart}},\ }\href@noop {} {\enquote {\bibinfo {title} {Classical simulation
  of quantum circuits with partial and graphical stabiliser decompositions},}\
  } (\bibinfo {year} {2022}),\ \Eprint {http://arxiv.org/abs/2202.09202}
  {arXiv:2202.09202} \BibitemShut {NoStop}%
\bibitem [{\citenamefont {Qassim}\ \emph {et~al.}(2021)\citenamefont {Qassim},
  \citenamefont {Pashayan},\ and\ \citenamefont {Gosset}}]{Qassim2021}%
  \BibitemOpen
  \bibfield  {author} {\bibinfo {author} {\bibfnamefont {H.}~\bibnamefont
  {Qassim}}, \bibinfo {author} {\bibfnamefont {H.}~\bibnamefont {Pashayan}}, \
  and\ \bibinfo {author} {\bibfnamefont {D.}~\bibnamefont {Gosset}},\ }\href
  {\doibase 10.22331/q-2021-12-20-606} {\bibfield  {journal} {\bibinfo
  {journal} {{Quantum}}\ }\textbf {\bibinfo {volume} {5}},\ \bibinfo {pages}
  {606} (\bibinfo {year} {2021})}\BibitemShut {NoStop}%
\bibitem [{\citenamefont {Pashayan}\ \emph {et~al.}(2015)\citenamefont
  {Pashayan}, \citenamefont {Wallman},\ and\ \citenamefont
  {Bartlett}}]{Pashayan2015}%
  \BibitemOpen
  \bibfield  {author} {\bibinfo {author} {\bibfnamefont {H.}~\bibnamefont
  {Pashayan}}, \bibinfo {author} {\bibfnamefont {J.~J.}\ \bibnamefont
  {Wallman}}, \ and\ \bibinfo {author} {\bibfnamefont {S.~D.}\ \bibnamefont
  {Bartlett}},\ }\href {\doibase 10.1103/PhysRevLett.115.070501} {\bibfield
  {journal} {\bibinfo  {journal} {Phys. Rev. Lett.}\ }\textbf {\bibinfo
  {volume} {115}},\ \bibinfo {pages} {070501} (\bibinfo {year}
  {2015})}\BibitemShut {NoStop}%
\bibitem [{\citenamefont {Stahlke}(2014)}]{Stahlke_2014}%
  \BibitemOpen
  \bibfield  {author} {\bibinfo {author} {\bibfnamefont {D.}~\bibnamefont
  {Stahlke}},\ }\href {\doibase 10.1103/PhysRevA.90.022302} {\bibfield
  {journal} {\bibinfo  {journal} {Phys. Rev. A}\ }\textbf {\bibinfo {volume}
  {90}},\ \bibinfo {pages} {022302} (\bibinfo {year} {2014})}\BibitemShut
  {NoStop}%
\bibitem [{\citenamefont {Rall}\ \emph {et~al.}(2019)\citenamefont {Rall},
  \citenamefont {Liang}, \citenamefont {Cook},\ and\ \citenamefont
  {Kretschmer}}]{Rall_2019}%
  \BibitemOpen
  \bibfield  {author} {\bibinfo {author} {\bibfnamefont {P.}~\bibnamefont
  {Rall}}, \bibinfo {author} {\bibfnamefont {D.}~\bibnamefont {Liang}},
  \bibinfo {author} {\bibfnamefont {J.}~\bibnamefont {Cook}}, \ and\ \bibinfo
  {author} {\bibfnamefont {W.}~\bibnamefont {Kretschmer}},\ }\href
  {https://link.aps.org/doi/10.1103/PhysRevA.99.062337} {\bibfield  {journal}
  {\bibinfo  {journal} {Phys. Rev. A}\ }\textbf {\bibinfo {volume} {99}},\
  \bibinfo {pages} {062337} (\bibinfo {year} {2019})}\BibitemShut {NoStop}%
\bibitem [{\citenamefont {Seddon}\ and\ \citenamefont
  {Campbell}(2019)}]{Seddon_2019}%
  \BibitemOpen
  \bibfield  {author} {\bibinfo {author} {\bibfnamefont {J.~R.}\ \bibnamefont
  {Seddon}}\ and\ \bibinfo {author} {\bibfnamefont {E.~T.}\ \bibnamefont
  {Campbell}},\ }\href {\doibase 10.1098/rspa.2019.0251} {\bibfield  {journal}
  {\bibinfo  {journal} {Proceedings of the Royal Society A: Mathematical,
  Physical and Engineering Sciences}\ }\textbf {\bibinfo {volume} {475}},\
  \bibinfo {pages} {20190251} (\bibinfo {year} {2019})}\BibitemShut {NoStop}%
\bibitem [{\citenamefont {Ferrie}\ and\ \citenamefont
  {Emerson}(2008)}]{ferrie2008frame}%
  \BibitemOpen
  \bibfield  {author} {\bibinfo {author} {\bibfnamefont {C.}~\bibnamefont
  {Ferrie}}\ and\ \bibinfo {author} {\bibfnamefont {J.}~\bibnamefont
  {Emerson}},\ }\href {https://doi.org/10.1088/1751-8113/41/35/352001}
  {\bibfield  {journal} {\bibinfo  {journal} {Journal of Physics A:
  Mathematical and Theoretical}\ }\textbf {\bibinfo {volume} {41}},\ \bibinfo
  {pages} {352001} (\bibinfo {year} {2008})}\BibitemShut {NoStop}%
\bibitem [{\citenamefont {Ferrie}\ and\ \citenamefont
  {Emerson}(2009)}]{Ferrie_2009}%
  \BibitemOpen
  \bibfield  {author} {\bibinfo {author} {\bibfnamefont {C.}~\bibnamefont
  {Ferrie}}\ and\ \bibinfo {author} {\bibfnamefont {J.}~\bibnamefont
  {Emerson}},\ }\href {\doibase 10.1088/1367-2630/11/6/063040} {\bibfield
  {journal} {\bibinfo  {journal} {New Journal of Physics}\ }\textbf {\bibinfo
  {volume} {11}},\ \bibinfo {pages} {063040} (\bibinfo {year}
  {2009})}\BibitemShut {NoStop}%
\bibitem [{\citenamefont {Gross}(2006)}]{cit:gross}%
  \BibitemOpen
  \bibfield  {author} {\bibinfo {author} {\bibfnamefont {D.}~\bibnamefont
  {Gross}},\ }\href {\doibase 10.1063/1.2393152} {\bibfield  {journal}
  {\bibinfo  {journal} {Journal of Mathematical Physics}\ }\textbf {\bibinfo
  {volume} {47}},\ \bibinfo {pages} {122107} (\bibinfo {year}
  {2006})}\BibitemShut {NoStop}%
\bibitem [{\citenamefont {Ruzzi}\ \emph {et~al.}(2005)\citenamefont {Ruzzi},
  \citenamefont {Marchiolli},\ and\ \citenamefont {Galetti}}]{Ruzzi_2005}%
  \BibitemOpen
  \bibfield  {author} {\bibinfo {author} {\bibfnamefont {M.}~\bibnamefont
  {Ruzzi}}, \bibinfo {author} {\bibfnamefont {M.~A.}\ \bibnamefont
  {Marchiolli}}, \ and\ \bibinfo {author} {\bibfnamefont {D.}~\bibnamefont
  {Galetti}},\ }\href {\doibase 10.1088/0305-4470/38/27/010} {\bibfield
  {journal} {\bibinfo  {journal} {Journal of Physics A: Mathematical and
  General}\ }\textbf {\bibinfo {volume} {38}},\ \bibinfo {pages} {6239}
  (\bibinfo {year} {2005})}\BibitemShut {NoStop}%
\bibitem [{\citenamefont {Marchiolli}\ \emph {et~al.}(2005)\citenamefont
  {Marchiolli}, \citenamefont {Ruzzi},\ and\ \citenamefont
  {Galetti}}]{Marchiolli_2005}%
  \BibitemOpen
  \bibfield  {author} {\bibinfo {author} {\bibfnamefont {M.~A.}\ \bibnamefont
  {Marchiolli}}, \bibinfo {author} {\bibfnamefont {M.}~\bibnamefont {Ruzzi}}, \
  and\ \bibinfo {author} {\bibfnamefont {D.}~\bibnamefont {Galetti}},\ }\href
  {\doibase 10.1103/PhysRevA.72.042308} {\bibfield  {journal} {\bibinfo
  {journal} {Phys. Rev. A}\ }\textbf {\bibinfo {volume} {72}},\ \bibinfo
  {pages} {042308} (\bibinfo {year} {2005})}\BibitemShut {NoStop}%
\bibitem [{\citenamefont {Fran\ifmmode~\mbox{\c{c}}\else \c{c}\fi{}a}\ \emph
  {et~al.}(2021)\citenamefont {Fran\ifmmode~\mbox{\c{c}}\else \c{c}\fi{}a},
  \citenamefont {Strelchuk},\ and\ \citenamefont {Studzi\ifmmode~\acute{n}\else
  \'{n}\fi{}ski}}]{Franca_2021}%
  \BibitemOpen
  \bibfield  {author} {\bibinfo {author} {\bibfnamefont {D.~S.}\ \bibnamefont
  {Fran\ifmmode~\mbox{\c{c}}\else \c{c}\fi{}a}}, \bibinfo {author}
  {\bibfnamefont {S.}~\bibnamefont {Strelchuk}}, \ and\ \bibinfo {author}
  {\bibfnamefont {M.}~\bibnamefont {Studzi\ifmmode~\acute{n}\else
  \'{n}\fi{}ski}},\ }\href {\doibase 10.1103/PhysRevLett.126.210502} {\bibfield
   {journal} {\bibinfo  {journal} {Phys. Rev. Lett.}\ }\textbf {\bibinfo
  {volume} {126}},\ \bibinfo {pages} {210502} (\bibinfo {year}
  {2021})}\BibitemShut {NoStop}%
\bibitem [{\citenamefont {Howard}\ and\ \citenamefont
  {Campbell}(2017)}]{howard_2017}%
  \BibitemOpen
  \bibfield  {author} {\bibinfo {author} {\bibfnamefont {M.}~\bibnamefont
  {Howard}}\ and\ \bibinfo {author} {\bibfnamefont {E.}~\bibnamefont
  {Campbell}},\ }\href {\doibase 10.1103/PhysRevLett.118.090501} {\bibfield
  {journal} {\bibinfo  {journal} {Phys. Rev. Lett.}\ }\textbf {\bibinfo
  {volume} {118}},\ \bibinfo {pages} {090501} (\bibinfo {year}
  {2017})}\BibitemShut {NoStop}%
\bibitem [{\citenamefont {Heinrich}\ and\ \citenamefont
  {Gross}(2019)}]{Heinrich2019}%
  \BibitemOpen
  \bibfield  {author} {\bibinfo {author} {\bibfnamefont {M.}~\bibnamefont
  {Heinrich}}\ and\ \bibinfo {author} {\bibfnamefont {D.}~\bibnamefont
  {Gross}},\ }\href {\doibase 10.22331/q-2019-04-08-132} {\bibfield  {journal}
  {\bibinfo  {journal} {{Quantum}}\ }\textbf {\bibinfo {volume} {3}},\ \bibinfo
  {pages} {132} (\bibinfo {year} {2019})}\BibitemShut {NoStop}%
\bibitem [{\citenamefont {Rahimi-Keshari}\ \emph {et~al.}(2016)\citenamefont
  {Rahimi-Keshari}, \citenamefont {Ralph},\ and\ \citenamefont
  {Caves}}]{Keshari2016}%
  \BibitemOpen
  \bibfield  {author} {\bibinfo {author} {\bibfnamefont {S.}~\bibnamefont
  {Rahimi-Keshari}}, \bibinfo {author} {\bibfnamefont {T.~C.}\ \bibnamefont
  {Ralph}}, \ and\ \bibinfo {author} {\bibfnamefont {C.~M.}\ \bibnamefont
  {Caves}},\ }\href {\doibase 10.1103/PhysRevX.6.021039} {\bibfield  {journal}
  {\bibinfo  {journal} {Phys. Rev. X}\ }\textbf {\bibinfo {volume} {6}},\
  \bibinfo {pages} {021039} (\bibinfo {year} {2016})}\BibitemShut {NoStop}%
\bibitem [{\citenamefont {Mari}\ and\ \citenamefont {Eisert}(2012)}]{Mari2012}%
  \BibitemOpen
  \bibfield  {author} {\bibinfo {author} {\bibfnamefont {A.}~\bibnamefont
  {Mari}}\ and\ \bibinfo {author} {\bibfnamefont {J.}~\bibnamefont {Eisert}},\
  }\href {\doibase 10.1103/PhysRevLett.109.230503} {\bibfield  {journal}
  {\bibinfo  {journal} {Phys. Rev. Lett.}\ }\textbf {\bibinfo {volume} {109}},\
  \bibinfo {pages} {230503} (\bibinfo {year} {2012})}\BibitemShut {NoStop}%
\bibitem [{\citenamefont {Raussendorf}\ \emph {et~al.}(2017)\citenamefont
  {Raussendorf}, \citenamefont {Browne}, \citenamefont {Delfosse},
  \citenamefont {Okay},\ and\ \citenamefont
  {Bermejo-Vega}}]{raussendorf2017contextuality}%
  \BibitemOpen
  \bibfield  {author} {\bibinfo {author} {\bibfnamefont {R.}~\bibnamefont
  {Raussendorf}}, \bibinfo {author} {\bibfnamefont {D.~E.}\ \bibnamefont
  {Browne}}, \bibinfo {author} {\bibfnamefont {N.}~\bibnamefont {Delfosse}},
  \bibinfo {author} {\bibfnamefont {C.}~\bibnamefont {Okay}}, \ and\ \bibinfo
  {author} {\bibfnamefont {J.}~\bibnamefont {Bermejo-Vega}},\ }\href@noop {}
  {\bibfield  {journal} {\bibinfo  {journal} {Physical Review A}\ }\textbf
  {\bibinfo {volume} {95}},\ \bibinfo {pages} {052334} (\bibinfo {year}
  {2017})}\BibitemShut {NoStop}%
\bibitem [{\citenamefont {Jones}(2013)}]{cody2013}%
  \BibitemOpen
  \bibfield  {author} {\bibinfo {author} {\bibfnamefont {C.}~\bibnamefont
  {Jones}},\ }\href {\doibase 10.1103/PhysRevA.87.022328} {\bibfield  {journal}
  {\bibinfo  {journal} {Phys. Rev. A}\ }\textbf {\bibinfo {volume} {87}},\
  \bibinfo {pages} {022328} (\bibinfo {year} {2013})}\BibitemShut {NoStop}%
\bibitem [{\citenamefont {Wales}\ and\ \citenamefont {Doye}(1997)}]{Wales1997}%
  \BibitemOpen
  \bibfield  {author} {\bibinfo {author} {\bibfnamefont {D.~J.}\ \bibnamefont
  {Wales}}\ and\ \bibinfo {author} {\bibfnamefont {J.~P.~K.}\ \bibnamefont
  {Doye}},\ }\href {\doibase 10.1021/jp970984n} {\bibfield  {journal} {\bibinfo
   {journal} {The Journal of Physical Chemistry A}\ }\textbf {\bibinfo {volume}
  {101}},\ \bibinfo {pages} {5111–5116} (\bibinfo {year} {1997})}\BibitemShut
  {NoStop}%
\bibitem [{\citenamefont {et~al.}(2021)}]{Qiskit}%
  \BibitemOpen
  \bibfield  {author} {\bibinfo {author} {\bibfnamefont {M.~S.~A.}\
  \bibnamefont {et~al.}},\ }\href {\doibase 10.5281/zenodo.2573505} {\enquote
  {\bibinfo {title} {Qiskit: An open-source framework for quantum computing},}\
  } (\bibinfo {year} {2021})\BibitemShut {NoStop}%
\bibitem [{\citenamefont {Cerezo}\ \emph {et~al.}(2021)\citenamefont {Cerezo},
  \citenamefont {Arrasmith}, \citenamefont {Babbush}, \citenamefont {Benjamin},
  \citenamefont {Endo}, \citenamefont {Fujii}, \citenamefont {{McClean}},
  \citenamefont {Mitarai}, \citenamefont {Yuan}, \citenamefont {Cincio},\ and\
  \citenamefont {Coles}}]{cerezo_variational_2021}%
  \BibitemOpen
  \bibfield  {author} {\bibinfo {author} {\bibfnamefont {M.}~\bibnamefont
  {Cerezo}}, \bibinfo {author} {\bibfnamefont {A.}~\bibnamefont {Arrasmith}},
  \bibinfo {author} {\bibfnamefont {R.}~\bibnamefont {Babbush}}, \bibinfo
  {author} {\bibfnamefont {S.~C.}\ \bibnamefont {Benjamin}}, \bibinfo {author}
  {\bibfnamefont {S.}~\bibnamefont {Endo}}, \bibinfo {author} {\bibfnamefont
  {K.}~\bibnamefont {Fujii}}, \bibinfo {author} {\bibfnamefont {J.~R.}\
  \bibnamefont {{McClean}}}, \bibinfo {author} {\bibfnamefont {K.}~\bibnamefont
  {Mitarai}}, \bibinfo {author} {\bibfnamefont {X.}~\bibnamefont {Yuan}},
  \bibinfo {author} {\bibfnamefont {L.}~\bibnamefont {Cincio}}, \ and\ \bibinfo
  {author} {\bibfnamefont {P.~J.}\ \bibnamefont {Coles}},\ }\href
  {https://doi.org/10.1038/s42254-021-00348-9} {\bibfield  {journal} {\bibinfo
  {journal} {Nature Reviews Physics}\ }\textbf {\bibinfo {volume} {3}},\
  \bibinfo {pages} {625} (\bibinfo {year} {2021})}\BibitemShut {NoStop}%
\end{thebibliography}%

\end{document}